\author{Peter Coppens and Panagiotis Patrinos$^\dagger$
\thanks{$^\dagger$P. Coppens and P. Patrinos are with the Department of Electrical
Engineering (ESAT-STADIUS), KU Leuven, Kasteelpark Arenberg
10, 3001 Leuven, Belgium.
        {Email: \tt\footnotesize peter.coppens@kuleuven.be, panos.patrinos@kuleuven.be}}%
\thanks{This work was supported by: the Research Foundation
Flanders (FWO) PhD grant 11E5520N and research projects G0A0920N, G086518N and G086318N;
Research Council KU Leuven C1 project No. C14/18/068; Fonds de la Recherche 
Scientifique – FNRS and the FWO – Vlaanderen under 
EOS project no 30468160 (SeLMA); EU's Horizon 2020 research and innovation programme: Marie Skłodowska-Curie grant No. 953348.}%
}%
\renewcommand{\bar}[1]{\overline{#1}}
\newcommand{\W}{\mathcal{W}}
\newcommand{\smoment}{{C}}
\newcommand{\smomenth}{\hat{\smoment}}
\newcommand{\hm}[1]{\bar{#1}}
\newcommand{\nsample}{M}
\newcommand{\rcn}{\mathcal{r}}
\newcommand{\ra}[1]{\mathbf{r}\text{-}\AVAR^{#1}}
\newcommand{\uLambda}{\mathrm{V}}
\newcommand{\matpart}[1]{[#1]_m}
\newcommand{\vecpart}[1]{[#1]_v}
\newcommand{\cstpart}[1]{[#1]_c}
\begin{document}

\maketitle
\thispagestyle{empty}

\begin{abstract}
In this paper we introduce a novel approach to distributionally robust optimal control that supports online learning
of the ambiguity set, while guaranteeing recursive feasibility. We introduce conic representable risk,
which is useful to derive tractable reformulations of distributionally robust optimization problems.
Specifically, to illustrate the techniques introduced, we utilize risk measures constructed based on data-driven 
ambiguity sets, constraining the second moment of the random disturbance.
In the optimal control setting, such moment-based risk measures lead to tractable optimal controllers when combined with
affine disturbance feedback. Assumptions on the constraints are given that guarantee recursive feasibility.
The resulting control scheme acts as a robust controller when little data is available
and converges to the certainty equivalent controller when a large sample count implies high confidence in the estimated 
second moment. This is illustrated in a numerical experiment. 
\end{abstract}

\begin{pub}
\begin{IEEEkeywords}
        \update*{Predictive control for linear systems; Constrained control; Statistical learning.}
\end{IEEEkeywords}
\end{pub}


\section{Introduction}
\IEEEPARstart{\update*{D}}{istributionally} \emph{robust optimization (DRO)} has gained traction recently as a technique that balances 
robustness with performance in an intuitive fashion. From a theoretical point of view such techniques act as 
regularizers \cite{Abadeh2015} and \ilpub{when employed }in a data-driven setting, DRO acts at the interface between 
stochastic and robust optimization \cite{Delage2010}. 

In the control community the potential of such techniques has not gone unnoticed \cite{Kim2020,VanParys2015}. Here one
would ideally solve stochastic optimal control problems like
\begin{equation}\label{eq:socproblem}%
        \begin{alignedat}{2}%
                &\update*{\minimize_{\bm{\pi} \in \bm{\Pi}}} &\quad& \E\left[ 
                        \ssum_{t=0}^{N-1} \ell_t(x_t, \update*{\pi_t(w_0, \dots, w_{t-1})}) + \ell_N(x_N)
                 \right] \nonumber \\                 
               &\update*{\stt}& &x_{t+1} = f(x_t, \update*{\pi_t(w_0, \dots, w_{t-1})}, w_t), \, t \in \N_{0:N-1} \nonumber\\ 
                &&&\prob[\phi(x_t) \leq 0] \geq 1-\varepsilon, \, t \in \N_{1:N-1} \nonumber\\
                &&&\psi(x_N) \leq 0 \text{ a.s.}, \, \update*{x_0 \text{ given},} \nonumber
        \end{alignedat}%
\end{equation}%
where $x_t \in \Re^{n_x}$ denotes the state. Parametrized, causal policies $\bm{\pi} \in \bm{\Pi}$ map disturbances to inputs. That is, an element $\pi_t$ of the sequence $\bm{\pi} = \{\pi_{t}\}_{t=0}^{N-1}$,
maps $\update*{\{w_i\}_{i=0}^{t-1}}$ to inputs in $\Re^{n_u}$ for $t \geq 1$ and $\pi_0 \in \Re^{n_u}$. Here, the disturbances $w_t \in \Re^{n_w}$ are i.i.d. random vectors, the distribution of which is unknown, usually introducing 
the need for robust approaches. DRO then improves upon classical robust control by using available data to infer properties of the distribution, while retaining guarantees.

The core construct in DRO is the \emph{ambiguity set}, a set of distributions against which one should robustify. Several
ambiguity sets have been examined with varying success. The most common are moment-based, $\phi$-divergence and 
Wasserstein ambiguity sets \cite{Rahimian2019}. Such ambiguity sets are connected to so-called \emph{risk measures} by duality. Hence 
this approach is directly related to risk-averse optimization \cite{Pichler2021}.

Throughout the paper we rely on \emph{conic representable ambiguity and risk} to derive tractable problems, similar to the methodology presented in \cite{Chen2019,Sopasakis2019}. 
The main contributions are then as follows: \emph{(i)} we derive tight,
data-driven, moment-based ambiguity sets that are conic representable and shrink when more data becomes available; \emph{(ii)} we extend 
conic risks to the multi-stage setting and use them to model \emph{average value-at-risk} constraints; 
\emph{(iii)} we synthesize the controller such that it is
recursive feasible when it is applied in a receding horizon fashion; \emph{(iv)} we illustrate 
how our framework leads to tractable controllers based on
affine disturbance feedback policies \cite{Goulart2005}, which are evaluated in numerical experiments.

Similar results were achieved in \cite{Mark2020} for a tube-based approach with Wasserstein ambiguity, the radius of which is not data-driven; 
in \cite{Lu2020} for relaxed, robust constraints; 
in \cite{VanParys2015} for moment-based ambiguity which is not data-driven and does not guarantee recursive feasibility;
and \cite{Schuurmans2020, SopasakisP2019RMPC} for discrete distributions. DR control of Markov decision processes with finite state-spaces was also 
considered in \cite{Wiesemann2013}.
Our framework supports online learning of truly data-driven ambiguity sets and risk constraints
within a continuous state-space, while guaranteeing recursive feasibility. 

This section continues with notation and preliminaries. Next \cref{sec:single-stage} introduces conic
and data-driven ambiguity in a single-stage setting and \cref{sec:multi-stage} introduces multi-stage extensions as well as the 
optimal control problem that we want to solve. Then \cref{sec:rec-feasibility} 
shows how to construct a controller such that recursive feasibility is guaranteed. Finally \cref{sec:affine_disturbance} 
illustrates how our\ilpub{ introduced} techniques lead to tractable controllers and contains numerical experiments.
\begin{pub}
        Detailed proofs are deferred to the technical report \citear{} for conciseness.
\end{pub}

\subsection{Notation and preliminaries} \label{sec:notation}
Let $\N$ denote the integers and $\Re$ $(\Re_+)$ the (nonnegative) reals. We denote by $\sym{d}$ the symmetric $d$ by $d$ 
matrices and by $\pd{d}$ ($\psd{d}$) the positive (semi)-definite matrices.
For two matrices of compatible dimensions $X, Y$ we use $[X; Y]$ $\left([X, Y]\right)$ for vertical (horizontal) concatenation. We use $\nrm{\cdot}_2$
to denote the spectral norm (Euclidean norm) for matrices (vectors) and $[\cdot]_+ \dfn \max(0, \cdot)$.
For matrices (or vectors) $X, Y$ 
and cone $\K$, let $X \leqc{\K} Y$ ($X \geqc{\K} Y$) be $Y-X\in \K$ ($X-Y \in \K$). When $\K =\psd{d}$ we use $\sleq$ ($\sgeq$).

Meanwhile, $(X, Y) \dfn [\mathrm{vec}(X); \mathrm{vec}(Y)]$ interprets $X, Y$ as column 
vectors in vertical concatenation. Let $\diag(X, Y)$ be a (block) diagonal matrix
and let $I_d \in \sym{d}$ denote the identity. For a vector $x \in \Re^d$,  $[x]_i$ 
denotes the $i$'th element.

\paragraph{Slice notation} 
We introduce $\N_{a:b}= \left\{ a, \dots, b
 \right\}$. Similarly we use $\bm{w}_{a:b}$ to denote the sequence
 $\left\{ w_i \right\}_{i \in \N_{a:b}}$.
 For a sequence of length $N$, index $a$ ($b$) is omitted when $0$ ($N-1$) 
is implied (when both are omitted we write $\bm{w}$).

Interpreting $\bm{w}_{0:N-1}$ with $w_i \in \Re^{n_w}$ as an element of $\Re^{Nn_w}$,
consider affine maps $\bm{x}_{0:M-1} = \bm{A} \bm{w}_{0:N-1} + \bm{a}_{0:M-1}$
($\bm{A} \in \Re^{Mn_x \times Nn_w}$). Introducing homogeneous
coordinates $\hm{\bm{w}} = (\bm{w}, 1)$ gives $\bm{x} = \hm{\bm{A}} \hm{\bm{w}}$,
with $\hm{\bm{A}} = [\bm{A}, \bm{a}]$. 

For a matrix $\bm{A}$ acting on sequences, the slice $\bm{A}_{i:j, k:\ell}$ describes the part mapping $\bm{w}_{k:\ell}$ 
to $\bm{x}_{i:j}$. So we take block rows and block columns, with blocks of size $\Re^{n_x \times n_w}$.

\paragraph{Risk measures and ambiguity}
Given some measurable space $(\W, \B)$ with $\W$ a compact subset of $\Re^{n_w}$ and $\B$ the associated Borel
sigma-algebra, we use $\Mf(\W)$ ($\Msf(\W)$) to denote the space of finite (signed) measures on $(\W, \B)$,
making the dependency on $\W$ explicit. Similarly,
let $\Pf(\W)$ denote the space of probability measures.

We also consider the space $\Z \dfn \Cont(\W)$
of continuous (bounded) $\B$-measurable functions $z \colon \W \to \Re$. Elements of $\Z$ act as 
random loss functions. Notation $z \sim \mu$ means $z$ has distribution $\mu \in \Pf(\W)$. 
The space $\Msf(\W)$ and $\Z$ are paired by the bilinear form \cite[\S2.2]{Pichler2021},
for $z \in \Z$, $\mu \in \Msf(\W)$,
\begin{equation*}
        \< z, \mu \> \dfn \ilarxiv{\int}\ilpub{\sint}_{\W} z(w) \di \mu(w).
\end{equation*}
We endow $\Msf(\W)$ with the weak$^*$ topology. 

We write $\notation*{z \geqas 0}$ ($\notation*{\mu \geqas 0}$) to imply $z(w) \geq 0$ ($\mu(w) \geq 0$), 
$\forall w \in \W$. Note that, since $\Z$ and $\Msf(\W)$ are linear spaces, we can use the usual
notation of linear operators (e.g., let $E \colon \Msf(\W) \to \Re^{n}$,
then $E \mu = (\<\epsilon_0, \mu\>, \dots \<\epsilon_{n-1}, \mu\>)$ for some random variables $\epsilon_i \update*{\in \Z}$, $i\in\N_{0:n-1}$). 
For each linear operator $E$ we have the adjoint $E^* \colon \Re^{n} \to \Z$,
with $E^*\lambda \dfn \notation*{(\epsilon_0, \dots, \epsilon_{n-1}) \cdot \lambda}$, where $\cdot$ is the usual inner product between vectors.
\begin{arxiv}%
After all,%
\begin{align}%
        E \mu \cdot \lambda &= (\<\epsilon_0, \mu\>, \dots \<\epsilon_{n-1}, \mu\>) \cdot \lambda \nonumber\\
        &= \left(\int_{\W} (\epsilon_0(w), \dots, \epsilon_{n-1}(w)) \di \mu(w) \right) \cdot \lambda\nonumber \\
        &= \int_{\W} \left((\epsilon_0(w), \dots, \epsilon_{n-1}(w)) \cdot \lambda \right) \di \mu(w) \nonumber\\
        &= \<(\epsilon_0, \dots, \epsilon_{n-1}) \cdot \lambda, \mu\> = \<\adj{E}\lambda, \mu\>. \label{eq:E-adj}
\end{align}%
\end{arxiv}%

\begin{prepupdate}
We define risk based on its ambiguity as in most DRO literature \cite{Pichler2021}.
Specifically, we say that $\amb \subseteq \Mf(\W)$ is an \emph{ambiguity set} if it is a non-empty, closed and convex subset of $\Pf(\W)$. 
The associated \ilpub{\emph{coherent}} risk measure $\rme_{\amb} \colon \Z \to \Re$ is then \cite[\S2]{Pichler2021}
\begin{equation} \label{eq:rme}
        \rme_{\amb}[z] = \max_{\mu \in \amb}  \<z, \mu\> = \max_{\mu \in \amb}  \E_{\mu}[z],
\end{equation}
where $\E_{\mu}[\cdot]$ denotes the expected value w.r.t. $\mu \in \Pf(\W)$.
and constitutes a mapping from random loss functions to the real line, 
which (similarly to expectation) can be used to deterministically compare random variables. 
\end{prepupdate}

\begin{arxiv}
Our definition of an ambiguity set is directly related to that of coherent risk \cite{Ruszczynski2006b}.
\begin{lemma} \label{lem:rme}
        Suppose that $\amb \subseteq \Pf(\W)$ is non-empty, closed and convex. Then $\rme_{\amb}$ in \eqref{eq:rme} is
        coherent. 
        Specifically, $\forall z, z' \in \Z$ and $\alpha \in \Re$, $\rme_{\amb}$ is
        \begin{enumerate}[label=(\roman*)]
                \item convex, proper, and lower semi--continuous; \label{lem:coh:prop}
                \item monotonous: $\rme_{\amb}(z) \geq \rme_{\amb}(z')$ if $z \geqas z'$; \label{lem:coh:mon}
                \item translation equivariant: $\rme_{\amb}(z+\alpha) = \rme_{\amb}(z) + \alpha$;\label{lem:coh:tran}
                \item positive homogeneous $\rme_{\amb}(\alpha z) = \alpha \rme_{\amb}(z)$ if $\alpha > 0$.\label{lem:coh:hom}
        \end{enumerate}
        Moreover, $\amb$ is compact and equal to the domain of $\adj{\rme_{\amb}}$ and $\rme_{\amb}[z]$ is finite, where $\adj{\rme_{\amb}}$ 
        denotes the convex conjugate. 
\end{lemma}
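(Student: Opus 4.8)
The plan is to read \eqref{eq:rme} as the support function of $\amb$ in the duality between $\Z$ and $\Msf(\W)$: then properties \ref{lem:coh:prop}--\ref{lem:coh:hom} become consequences of the sup-of-affine-functionals structure, compactness of $\amb$ follows from Banach--Alaoglu, and the conjugate identity from Fenchel--Moreau. I expect the only delicate point to be keeping track of which topology is used where.

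First I would settle compactness and well-posedness. By the Riesz representation theorem $\Msf(\W) = \Cont(\W)^* = \Z^*$, so the weak$^*$ topology on $\Msf(\W)$ is $\sigma(\Msf(\W),\Z)$ and the total-variation unit ball is weak$^*$-compact by Banach--Alaoglu. The set $\Pf(\W)$ is weak$^*$-closed inside that ball, being the intersection of the half-spaces $\{\mu : \<z,\mu\>\geq 0\}$, $z\geqas 0$, with the hyperplane $\{\mu : \<1,\mu\> = 1\}$; hence $\Pf(\W)$ is weak$^*$-compact, and the closed subset $\amb$ with it. For fixed $z\in\Z$ the functional $\mu\mapsto\<z,\mu\>$ is weak$^*$-continuous, so the supremum in \eqref{eq:rme} is attained (justifying the $\max$), and $|\<z,\mu\>|\leq\nrm{z}_\infty$ for $\mu\in\Pf(\W)$ gives $\rme_{\amb}[z]\in\Re$; in particular $\rme_{\amb}$ is proper.

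Next I would verify \ref{lem:coh:prop}--\ref{lem:coh:hom} directly from $\rme_{\amb}[z] = \sup_{\mu\in\amb}\<z,\mu\>$. It is a pointwise supremum of the norm-continuous linear functionals $z\mapsto\<z,\mu\>$ (in fact $1$-Lipschitz, since each has norm one on $\Pf(\W)$), hence convex and lower semicontinuous on $\Z$; with the finiteness above this is \ref{lem:coh:prop}. Monotonicity \ref{lem:coh:mon} holds since $\mu\geqas 0$ for $\mu\in\amb$, so $z\geqas z'$ forces $\<z,\mu\>\geq\<z',\mu\>$ and hence the same for the suprema. Translation equivariance \ref{lem:coh:tran} uses $\<z+\alpha,\mu\> = \<z,\mu\>+\alpha\<1,\mu\> = \<z,\mu\>+\alpha$ because $\mu(\W)=1$; positive homogeneity \ref{lem:coh:hom} uses $\<\alpha z,\mu\>=\alpha\<z,\mu\>$ and $\alpha>0$ to factor the scalar out of the supremum.

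Finally I would show $\adj{\rme_{\amb}}=\delta_{\amb}$, the indicator of $\amb$, which gives that the domain of $\adj{\rme_{\amb}}$ is $\amb$. For $\mu\in\amb$, choosing $\nu=\mu$ in $\sup_{\nu\in\amb}\<z,\nu\>$ yields $\<z,\mu\>-\rme_{\amb}[z]\leq 0$ for all $z\in\Z$, while $z=0$ yields $\geq 0$, so $\adj{\rme_{\amb}}(\mu)=0$. For $\mu\notin\amb$, since $\amb$ is nonempty, convex and weak$^*$-closed, Hahn--Banach separation in $(\Msf(\W),\sigma(\Msf(\W),\Z))$ gives $z\in\Z$ with $\<z,\mu\> > \sup_{\nu\in\amb}\<z,\nu\>=\rme_{\amb}[z]$; applying this to $tz$ and letting $t\to+\infty$ (using \ref{lem:coh:hom}) gives $\adj{\rme_{\amb}}(\mu)=+\infty$. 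The main obstacle, as anticipated, is the topological bookkeeping: the closedness hypothesis on $\amb$ must be understood in $\sigma(\Msf(\W),\Z)$, which is at once strong enough for Banach--Alaoglu to yield compactness of $\amb$ and weak enough that the separating functional lies in $\Z$ rather than in a larger dual, and this weak$^*$ topology must not be conflated with the norm topology on $\Z$ governing lower semicontinuity in \ref{lem:coh:prop}.
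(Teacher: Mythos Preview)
Your proof is correct and follows essentially the same strategy as the paper: both treat $\rme_{\amb}$ as the support function of $\amb$ and derive the claims from standard convex-analytic duality. The only differences are cosmetic---the paper invokes Prohorov's theorem for compactness of $\Pf(\W)$ and the Fenchel--Moreau biconjugate identity $\chi_{\amb}^{**}=\chi_{\amb}$ for the conjugate claim, citing external references for (i)--(iv), whereas you verify everything directly via Banach--Alaoglu and Hahn--Banach separation.
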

\begin{proof}
        Let $\chi_{\amb}$ denote the indicator function of $\amb$ (i.e. $\chi_{\amb}[\mu] = +\infty$ if $\mu \notin \amb$ and $0$ otherwise). Then, by \eqref{eq:rme},
        \begin{equation} \label{eq:rme-conj}
                \rme[z] = \adj{\chi} = \sup_{\mu \in \Msf(\W)} \{ \<z, \mu\> - \chi[\mu]\},
        \end{equation}
        where we omit the subscript of $\rme_{\amb}$ and $\chi_{\amb}$ for convenience. Since $\chi$ is an indicator function, it is convex ($\amb$ is convex);
        lower semi--continuous ($\amb$ is closed); and proper ($\amb$ is nonempty).
        Therefore, by \cite[Prop.2.112]{Bonnans2000} and \eqref{eq:rme-conj}, $\rme$ is proper, convex and lower semi--continuous 
        (its epigraph is an intersection of closed halfspaces). Therefore \ref{lem:coh:prop} holds. 

        Since $\chi$ is convex and lower semi--continuous, we apply \cite[Thm.~2.133]{Bonnans2000} to show $\chi = \chi^{**} = \adj{\rme}$,
        where the second equality follows by \eqref{eq:rme-conj}.
        Hence the domain of $\adj{\rme}$ is $\amb$. 

        Compactness of $\Pf(\W)$ follows by Prohorov's theorem \cite[p.13]{Shapiro2001}. 
        Since $\amb$ is a closed subset of $\Pf(\W)$ it is also compact.

        The results \ref{lem:coh:mon}--\ref{lem:coh:hom} follow directly from \cite[Thm.~2.2]{Ruszczynski2006b}. Specifically \ref{lem:coh:mon}
        from $\amb \subset \Mf(\W)$, \ref{lem:coh:tran} from $\mu(\W) = 1$ for all $\mu \in \amb$ and \ref{lem:coh:hom} from \eqref{eq:rme}. 

        Next, we show that for any $z \in \Z$,
        \begin{equation} \label{eq:robineq}
                \<z, \mu\> \leq \alpha, \, \forall \mu \in \Pf(\W) \quad \Leftrightarrow \quad \notation*{z \leqas \alpha},
        \end{equation}
        where the inequality on the right holds pointwise over $\W$ (i.e. $z(w) \leq \alpha, \forall w \in \W$).
        The argument for \eqref{eq:robineq} is as follows \cite[Eq.~3.7]{Shapiro2001}.
        Since $\mu(\W) = 1$,  $\<z, \mu\> \leq \alpha$ iff $\<\alpha - z, \mu\> \geq 0$, which holds if $\notation*{\alpha - z \geqas 0}$ 
        (since $\notation*{\mu \geqas 0}$). For the converse note $\delta_w \in \Pf(\W)$ 
        for any $w \in \W$ with $\delta_w$ a dirac measure. So  $\<\alpha - z, \delta_w\> = \alpha - z(w) \geq 0$ for any $w \in \W$ is a necessary condition. 
        So we have shown \eqref{eq:robineq}
        
        From \eqref{eq:robineq} we can conclude $\<z, \mu\> \leq \sup_{w\in\W} z(w)$, $\forall z \in \Z, \mu \in \amb$.
        Hence, by \eqref{eq:rme}, $\rme[z] \leq \sup_{w\in\W} z(w)$.
        Since $z(w)$ is finite for any $w \in \W$, $\rme[z]$ is finite (cf. \cite[\S2.2]{Pichler2021}).
\end{proof}
\end{arxiv}

\section{Single-stage problems} \label{sec:single-stage}
Given the dual formulation of a risk measure in \eqref{eq:rme}, it is clear that 
the choice of $\amb$ is a critical design decision. In this section we introduce how ambiguity sets, using moment information,
are derived from data. We also introduce \emph{conic representable risk},
used to derive tractable problems.

\subsection{Data-driven risk}
In DRO the reasoning is usually as follows.
Consider a probability space $(\Omega, \F, \prob)$ and the optimization problem
\begin{equation}
        \begin{alignedat}{2}
                &\minimize_{u \in \mathcal{U}} &\qquad& \E_{\update*{\mu_{\star}}}[f(u, w)], \label{eq:sto}
        \end{alignedat}
\end{equation}
with $u \in \Re^{n_u}$ some decision variable, $f$ some loss function
and $w \colon \Omega \to \W$ a random variable with $\W \subset \Re^{n_{w}}$ the (compact) support
of $w$. The main difficulty in solving the stochastic optimization problem \eqref{eq:sto} 
is that the \emph{distribution} (or \emph{push-forward measure}), $\mu_{\star} \in \Pf(\W)$
defined on the sample space $(\W, \B)$ as 
$\mu_{\star}(O) = \prob[w^{-1}(O)]$
for all $O \in \B$ and with $w^{-1}(O)$ the pre--image of $O$, is unknown.

Hence, instead one introduces an ambiguity set $\amb \subseteq \Pf(\W)$,
which contains $\mu_{\star}$ with some confidence. To do so one can estimate 
some statistic $\theta$ based on data. In the case of $\phi$-divergence \cite{Schuurmans2020} and
Wasserstein ambiguity \cite{Kuhn2019}, this $\theta$ is the empirical distribution, while 
for moment-based ambiguity, $\theta$ encapsulates moment information. We will consider
this final case in \cref{sec:moment-based}. To summarize:
\begin{definition} \label{def:dd-amb}
        Consider random variable $w \colon \Omega \to \W$ with
        distribution $\mu_\star$ and i.i.d. samples $\bm{w}_{0:M-1} \colon \Omega \to \W^M$. 
        Let $\theta \colon \W^M \to \Theta$ denote a statistic for a set $\Theta$ and let 
        $\beta \in \Re$ be some radius\footnote{Some moment-based ambiguity
        set can have multiple radii (cf. \cite{Delage2010}).}. 
        Then a \emph{data-driven ambiguity} 
        $\amb \colon \Theta \times \Re \rightrightarrows \Pf(\W)$
        with confidence $\delta \in (0, 1)$ maps $(\theta(\bm{w}), \beta)$ 
        to an ambiguity set $\amb_{\beta}(\theta(\bm{w})) \subseteq \Pf(\W)$
        such that
        \begin{equation} \label{eq:amb_conf}
                \prob[\mu_\star \in \amb_{\beta}({\theta}(\bm{w}_{0:M-1}))] \geq 1 - \delta.
        \end{equation}
\end{definition}
In \cite{Schuurmans2020} this is referred to as a \emph{learning system}.

Based on \eqref{eq:amb_conf} we minimize $\rme_{\amb_{\beta}(\hat{\theta})}[f(u, w)]$ instead of \eqref{eq:sto}.
The result upper bounds \eqref{eq:sto} with probability at least $1-\delta$. 

\subsection{Moment-based ambiguity} \label{sec:moment-based}
As mentioned before, we focus on the case where $\theta$ 
encapsulates moment information. Such ambiguity sets have the 
advantage that \cite{Kuhn2019} \emph{(i)} they can contain measures with support not limited to
the observed samples (unlike most $\phi$-divergence based sets); \emph{(ii)} 
the radius is estimated with reasonable accuracy based on known information of the distribution
(unlike for Wasserstein-based sets); and \emph{(iii)} problem complexity does not grow with the sample count. 

To ensure that an ambiguity set satisfying \eqref{eq:amb_conf} can be derived, we 
assume that $\W$ is bounded, which is often the case in control applications and is therefore 
the usual assumption in robust control. Other common choices are that 
$w$ is multivariate Gaussian or that it satisfies some 
concentration properties (e.g., sub-Gaussian) \cite{Coppens2019}.
We have:
\begin{lemma} \label{lem:dd-moment}
        Let $\W = \left\{ w \in \Re^{n_w} \colon \nrm*{w}_2 \leq r \right\}$ and $R_w = \diag(I_{n_w}, c r)$ with $c \in \Re$. 
        Assume we have a set of i.i.d. samples $\bm{w}_{0:M-1}$
        of $w \sim \mu_{\star}$ and let
        $\smomenth \dfn \ssum_{i=0}^{M-1} \hm{w}_i \trans{\hm{w}_i}/\nsample$.    
        Then,
        \begin{equation*}
                \amb_{\beta}(\smomenth) = \left\{ \mu \in \Pf(\W) \colon
                 \nrm*{ R_w(\smomenth - \E_{\mu}[\hm{w} \,\trans{\hm{w}}]) \trans{R_w}}_2 \leq \beta \right\},
        \end{equation*}
        satisfies \eqref{eq:amb_conf} when $\beta = 0.5 r^2 (1 + \sqrt{1 + 16 c^2}) \sqrt{2\log(2(n_w + 1)/\delta)/M}$.
\end{lemma}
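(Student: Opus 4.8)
The plan is to bound the probability that the empirical second-moment matrix $\smomenth$ deviates from the true second-moment matrix $\smoment_\star \dfn \E_{\mu_\star}[\hm{w}\,\trans{\hm{w}}]$, in the weighted spectral norm $\nrm{R_w(\cdot)\trans{R_w}}_2$, by more than $\beta$. Concretely, I would write $X_i \dfn R_w \hm{w}_i \trans{\hm{w}_i} \trans{R_w} - \E[R_w \hm{w}\,\trans{\hm{w}}\,\trans{R_w}]$, so that the $X_i$ are i.i.d., zero-mean, symmetric $(n_w+1)\times(n_w+1)$ random matrices, and $\nrm{R_w(\smomenth - \smoment_\star)\trans{R_w}}_2 = \nrm{\tfrac1M \ssum_{i=0}^{M-1} X_i}_2$. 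The goal is then exactly a matrix-concentration statement: $\prob[\nrm{\tfrac1M\ssum X_i}_2 > \beta] \leq \delta$, which by the stated choice of $\beta$ should follow from a matrix Bernstein/Hoeffding inequality once the right almost-sure bound on $\nrm{X_i}_2$ is established.

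**First I would** compute the deterministic bound on $\nrm{R_w \hm{w}\,\trans{\hm{w}}\,\trans{R_w}}_2$ for $w \in \W$. Since $\hm{w} = (w,1)$ and $R_w = \diag(I_{n_w}, cr)$, we have $R_w\hm{w} = (w, cr)$, so $R_w\hm{w}\,\trans{\hm{w}}\,\trans{R_w} = (w,cr)\trans{(w,cr)}$ is rank one with spectral norm $\nrm{(w,cr)}_2^2 = \nrm{w}_2^2 + c^2r^2 \leq r^2 + c^2 r^2 = r^2(1+c^2)$ using $\nrm{w}_2 \leq r$. Hence $\nrm{R_w\hm{w}\,\trans{\hm{w}}\,\trans{R_w}}_2 \leq r^2(1+c^2)$ almost surely, and the same bound holds for its mean by Jensen/convexity of the norm, so $\nrm{X_i}_2 \leq 2r^2(1+c^2)$ — wait, a sharper route is needed to match the stated constant, so instead I would bound the spread directly: $X_i$ lies between $-\E[R_w\hm{w}\trans{\hm{w}}\trans{R_w}] \sgeq -r^2(1+c^2)I$ and $R_w\hm{w}\trans{\hm{w}}\trans{R_w} - \E[\cdots] \sleq r^2(1+c^2)I$; a careful accounting of the one-sided bounds (the positive part is bounded by $r^2+c^2r^2$ minus the smallest eigenvalue of the mean, the negative part by that eigenvalue) should produce the factor $\tfrac12(1+\sqrt{1+16c^2})$ rather than the crude $2(1+c^2)$.

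**The main obstacle**, and the step I'd spend the most care on, is getting precisely the constant $\tfrac12 r^2(1+\sqrt{1+16c^2})$ out of the concentration inequality — the square-root-of-a-quadratic shape strongly suggests that the relevant quantity is not a naive uniform bound on $\nrm{X_i}_2$ but rather the solution of a quadratic optimizing the trade-off between the range of the $w$-block (scale $r^2$) and the off-diagonal/constant block (scale $cr \cdot r = cr^2$, whence the $c^2$ and, after the Bernstein-type algebra, the $16c^2$). I would therefore (i) pick the matrix concentration bound whose exponent is $-M\beta^2/(2L^2)$ with $L$ the almost-sure bound on $\nrm{X_i}_2$ — e.g. the matrix Hoeffding inequality from Tropp, giving $\prob[\nrm{\tfrac1M\ssum X_i}_2 > \beta] \leq 2(n_w+1)\exp(-M\beta^2/(2L^2))$; (ii) set the right-hand side equal to $\delta$ and solve, yielding $\beta = L\sqrt{2\log(2(n_w+1)/\delta)/M}$; and (iii) verify that the correct value of $L$ coming from the one-sided eigenvalue analysis of $X_i$ is exactly $\tfrac12 r^2(1+\sqrt{1+16c^2})$. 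Matching the dimension factor $n_w+1$ confirms this is the $(n_w+1)$-dimensional matrix version, and the failure probability $\delta$ then gives \eqref{eq:amb_conf} since $\mu_\star \in \amb_\beta(\smomenth)$ is precisely the event $\nrm{\tfrac1M\ssum X_i}_2 \leq \beta$.
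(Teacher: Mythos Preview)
Your overall strategy matches the paper's: set $X_i \dfn R_w\hm{w}_i\trans{\hm{w}_i}\trans{R_w} - \E[R_w\hm{w}\,\trans{\hm{w}}\,\trans{R_w}]$, apply a matrix Hoeffding bound of the form $\prob[\nrm{\tfrac1M\sum X_i}_2 > \beta] \leq 2(n_w+1)\exp(-M\beta^2/(2L^2))$, and invert for $\beta$. Steps (i) and (ii) are exactly right.

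The gap is in step (iii). Your crude bound $\nrm{X_i}_2 \leq 2r^2(1+c^2)$ is correct but loose, and your suggested refinements (one-sided eigenvalue bounds on the positive and negative parts, or ``Bernstein-type algebra'') do not produce the constant $\tfrac12 r^2(1+\sqrt{1+16c^2})$. The missing observation is structural: the $(n_w{+}1,n_w{+}1)$ entry of $X_i$ is identically zero, because the last coordinate of $\hm{w}$ is deterministically $1$, so both $R_w\hm{w}_i\trans{\hm{w}_i}\trans{R_w}$ and its mean have bottom-right entry $c^2r^2$. Thus
\[
X_i = \begin{bmatrix} X & x \\ \trans{x} & 0 \end{bmatrix}, \qquad X = w_i\trans{w_i} - \E[w\,\trans{w}], \quad x = cr\,(w_i - \E[w]),
\]
with $\nrm{X}_2 \leq r^2$ and $\nrm{x}_2 \leq 2cr^2$ almost surely. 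The paper proves an elementary lemma: any eigenvalue $\lambda$ of such a block matrix satisfies $|\lambda| \leq \tfrac12\bigl(r_X^2 + \sqrt{r_X^4 + 4r_x^2}\bigr)$ when $\nrm{X}_2 \leq r_X^2$ and $\nrm{x}_2 \leq r_x$ (write the eigenvector equation, eliminate the scalar component, and solve the resulting quadratic in $\lambda$). Plugging in $r_X^2 = r^2$ and $r_x = 2cr^2$ gives exactly $L = \tfrac12 r^2(1+\sqrt{1+16c^2})$. The square-root-of-a-quadratic shape you noticed comes from this $2\times 2$-block eigenvalue calculation, not from the concentration inequality.
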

\begin{proof}
        We use a matrix Hoeffding bound \cite[Thm.~1.3]{Tropp2012} with improved constants. See \ilarxiv{\cref{app:dd-moment-proof}}\ilpub{\cite[App.~A]{Arxiv}} for the full proof.
\end{proof}

\begin{remark}
        In the numerical experiments towards the end of the paper we select $c = 1/4$. This choice results in a relatively simple expression for the radius
        \begin{equation*}
                \beta = 0.5 (1 + \sqrt{2}) r^2 \sqrt{2 \log(2(n_w + 1)/\delta)/\nsample}
        \end{equation*}
        and performed well in experiments. 
\end{remark}

\subsection{Conic-representable ambiguity}
We
introduce \emph{conic representable ambiguity} (similar to the framework in \cite{Sopasakis2019,Chen2019}) 
below and show how such risk is related to robust optimization through conic duality.
\begin{definition} \label{def:conic-risk}
        Consider a compact sample space $\W \subset \Re^{n_w}$
        and $\Z = \Cont(\W)$.
        An ambiguity set $\amb$
        is \emph{conic representable} if, for some
        $E, F \colon \Msf(\W) \to \Re^{n_{b}}$ and $b \in \Re^{n_b}$,        
        \begin{equation*}
                \amb = \left\{\mu \in \update*{\Pf(\W)} \colon \exists \nu \in \Mf(\W), E\mu+F\nu \leqc{\K} b\right\},
        \end{equation*}
        with $\nu$ some auxiliary measure and $\K$ a closed, convex cone. 
        Usually we assume $F = 0$. When $F \neq 0$ we refer to the ambiguity as \emph{$\nu$-conic representable}.
        Similarly we refer to $\rho_{\amb}$, as in \eqref{eq:rme}, as ($\nu$-)conic representable risk (conic for short).
\end{definition}
\begin{pub}%
\update*{The parameters of $\amb$ should be selected such that it is an ambiguity set (i.e., a nonempty, closed and convex subset of $\Pf(\W)$).
In practice $\amb$ will often be chosen as nonempty. Moreover, since $E$ and $F$ are 
continuous mappings over $\Mf$ by construction, closedness of $\amb$ follows from closedness of $\Pf(\W)$ \cite{Arxiv}. 
Convexity follows by linearity of $E$, $F$ and convexity of $\K$.}
\end{pub}%
\begin{arxiv}%
The parameters of $\amb$ should be selected such that it is an ambiguity set (i.e., a nonempty, closed and convex subset of $\Pf(\W)$).
Since we usually want an $\amb$ satisfying \eqref{eq:amb_conf},
it will be non-empty as it should at least contain the true distribution. The random variables used to construct $E$ and $F$,
are all continuous. Therefore \cite[Thm. 15.5]{Aliprantis2006} $E$ and $F$ are continuous mappings. Thus $\amb$ is 
the intersection between the closed set $\Pf(\W)$ and the pre--image of a closed set under a continuous mapping, which is also closed. 
Hence $\amb$ is a closed subset of $\Pf(\W)$. Convexity of $\amb$ then follows, since $E$ and $F$ are linear and $\K$ is convex.
\end{arxiv}%

In \cite{Sopasakis2019} it was shown that both the \emph{average} 
and \emph{entropic} value-at-risk are conic\ilpub{ representable} whenever 
$\W$ is finite. Many more \ilpub{risk measures}\ilarxiv{risks} fall under this framework \cite{Chen2019,Wiesemann2014}.

Direct application of conic linear duality \ilarxiv{\cite{Shapiro2001}}\ilpub{\cite{Shapiro2001,Arxiv}}
gives:
\begin{lemma} \label{lem:cr-dual}
        \update*{A risk $\rme_{\amb}[z]$ as in \cref{def:conic-risk} \ilpub{equals}\ilarxiv{is equal to the optimal value of}}
        \ilpub{\begin{equation}
                \min_{\lambda \geqc{\adj{\K}} 0, \update*{\tau}} \{ \update*{\tau} +  \notation*{b \cdot \lambda} \colon
                \update*{\tau} + \notation*{\adj{E}\lambda \geqas z}, \, \notation*{\adj{F}\lambda \geqas 0} \}, \nonumber
        \end{equation}}\ilarxiv{%
        \begin{equation} \label{eq:rme-dual}
                \begin{alignedat}{2}
                        &\minimize_{\lambda \geqc{\adj{\K}} 0, \tau} &\qquad&  \tau + b \cdot \lambda \\
                        &\stt &&  \adj{E}\lambda + \tau \geqas z, \, \adj{F}\lambda \geqas 0,
                \end{alignedat}\tag{D}
        \end{equation}}
        where the functional inequalities should hold pointwise for all $w \in \W$,
        $\adj{E}$ and $\adj{F}$ denote the adjoint operators \update*{(cf. \cref{sec:notation})},
        and $\adj{\K}$ the dual cone.
\end{lemma}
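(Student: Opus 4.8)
The plan is to set up the risk evaluation $\rme_{\amb}[z]$ as a conic linear program over the measure variables $(\mu,\nu)$ and then take its conic dual. Starting from \eqref{eq:rme} and \cref{def:conic-risk}, write
\[
        \rme_{\amb}[z] = \sup_{\mu,\nu}\{\<z,\mu\> \colon \mu \in \Pf(\W),\ \nu \geqas 0,\ E\mu + F\nu \leqc{\K} b\}.
\]
The constraint $\mu \in \Pf(\W)$ decomposes into $\mu \geqas 0$ together with the normalization $\<\mathbf{1},\mu\> = 1$; the latter is an affine equality that will contribute the scalar dual variable $\tau$ (after accounting for a sign/translation), while $\mu \geqas 0$ and $\nu \geqas 0$ are the cone constraints on the primal variables living in $\Msf(\W)$ (whose nonnegative cone is $\Mf(\W)$). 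So the primal is a bona fide conic linear program in the dual pair $(\Z, \Msf(\W))$ with cone constraint $E\mu+F\nu+s = b$, $s \in \K$, plus $\<\mathbf{1},\mu\>=1$.

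Next I would introduce dual multipliers: $\lambda \in \Re^{n_b}$ paired with the cone constraint $E\mu+F\nu \leqc{\K} b$, and $\tau \in \Re$ paired with the normalization. Forming the Lagrangian and using the adjoint identity \eqref{eq:E-adj}, the inner supremum over $\mu \geqas 0$ is finite only if $\adj{E}\lambda + \tau - z \geqas 0$ (pointwise on $\W$), the supremum over $\nu \geqas 0$ is finite only if $\adj{F}\lambda \geqas 0$, and the supremum over $s \in \K$ forces $\lambda \geqc{\adj{\K}} 0$; the residual objective is $\tau + b\cdot\lambda$. This yields exactly the dual problem in the statement. The monotone/translation structure here is the same bookkeeping as in the proof of \cref{lem:rme}, in particular the equivalence \eqref{eq:robineq} between a pointwise-on-$\W$ inequality and an inequality under integration against every $\mu \in \Pf(\W)$, which justifies passing from the measure-level constraint to the pointwise functional constraint.

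The substantive point — and the main obstacle — is not the formal Lagrangian manipulation but establishing \emph{strong duality}, i.e.\ that the dual optimal value equals the primal value with no gap and (ideally) that the primal sup is attained. For this I would invoke a conic-duality theorem in paired locally convex spaces, e.g.\ the machinery in \cite{Shapiro2001} on duality of linear problems over measures: the key hypotheses are that $\W$ is compact (so $\Pf(\W)$ is weak$^*$ compact by Prohorov, as already used in \cref{lem:rme}), that the relevant cones are closed and convex, and that a constraint-qualification / closedness-of-the-image condition holds so that no duality gap arises. Since $\amb$ is assumed nonempty the primal is feasible, and weak$^*$ compactness of $\amb$ plus weak$^*$ continuity of $\mu \mapsto \<z,\mu\>$ gives primal attainment; the no-gap statement then follows from the cited conic duality result. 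I would relegate the precise citation and the verification of the constraint qualification to the appendix (as the statement's proof line ``Direct application of conic linear duality'' already signals), and in the body simply note that compactness of $\W$ together with \cref{lem:rme} supplies everything needed.
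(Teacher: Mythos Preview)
Your proposal is correct and follows essentially the same route as the paper's proof: set up the primal conic LP over $(\mu,\nu)\in\Mf(\W)^2$ with the normalization $\<\mathbf{1},\mu\>=1$, form the Lagrangian with multipliers $(\tau,\lambda)$, use the identification of the dual cone of $\Mf(\W)$ with pointwise-nonnegative functions (your reference to \eqref{eq:robineq}) to obtain the constraints $\adj{E}\lambda+\tau\geqas z$ and $\adj{F}\lambda\geqas 0$, and then close the strong-duality gap via the conic duality result in \cite{Shapiro2001} using compactness of $\W$ and nonemptiness of $\amb$. The paper's proof does exactly this, citing \cite[Cor.~3.1]{Shapiro2001} for the no-gap step; your introduction of a slack $s\in\K$ is a cosmetic variant of the same Lagrangian bookkeeping.
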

\begin{arxiv}\begin{prepupdate}%
\begin{proof}
        By \eqref{eq:rme} the primal problem is
        \begin{equation} \label{eq:rme-primal}
                \begin{alignedat}{2}
                &\maximize_{\mu, \nu \in \Mf(\W)} &\qquad& \<z, \mu\> \\
                &\stt && E \mu + F\nu \leqc{\K} b\\
                &&& \<1, \mu\> = 1, 
                \end{alignedat}\tag{$P$}
        \end{equation}
        with $\mathrm{val}(P) = \rme[z]$ (where we omit the subscript for convenience). 
        We refer to the minimization problem \eqref{eq:rme-dual} as the dual problem. 
        Let $\tau \in \Re$ and $\lambda \in \Re^{n_b}$. Then the Lagrangian is
        \begin{align*}
                \varphi[\mu, \nu, \lambda] &\dfn \<z, \mu\> + (1 - \<1, \mu\>)\cdot \tau + (b - E \mu - F \nu) \cdot \lambda \\
                &= \tau + b \cdot \lambda - \<\tau + \adj{E} \lambda - z, \mu\> - \<\adj{F} \lambda, \nu \>,
        \end{align*}
        where we can use \eqref{eq:E-adj} to construct the adjoints. We have
        \begin{equation*}
                \adj{\K} \dfn \{\lambda \in \Re^{n_b} \colon \adj{\lambda} \cdot \lambda \geq 0, \forall \adj{\lambda} \in \K\}.
        \end{equation*}
        Hence $\max_{\nu \in \Mf(\W)} \min_{\lambda \in \adj{\K}, \tau} \{(b - E\mu - F\nu) \cdot \lambda\} = -\chi[\mu]$, where $\chi$ is 
        the indicator of $\amb$. Therefore
        \begin{equation*}
                \max_{\mu, \nu \in \Mf(\W)} \min_{\lambda \in \adj{\K}, \tau} \{\varphi[\mu, \nu, \lambda]\} = \max_{\mu \in \Mf(\W)} \{\<z, \mu\> - \chi[\mu]\} = \rme[z].
        \end{equation*}
        Similarly note that \cite[Eq.~3.7]{Shapiro2001}
        \begin{align*}
                \adj{\Mf}(\W)   &\dfn \{z \in \Z \colon \<z, \mu\> \geq 0, \forall \mu \in \Mf(\W)\} \\
                                &= \{z \in \Z \colon z(w) \geq 0, \forall w \in \W\} = \{z \in \Z \colon z \geqas 0\},
        \end{align*}
        which follows from a similar argument as \eqref{eq:robineq}. As such
        \begin{equation*}
                \min_{\lambda \in \adj{\K}, \tau} \max_{\mu, \nu \in \Mf(\W)} \{\varphi[\mu, \nu, \lambda]\} = \mathrm{val}(D),
        \end{equation*}
        since $\lambda$ gives a finite cost iff $\tau + \adj{E}\lambda - z\in \adj{\Mf}(\W)$ and $\adj{F}\lambda \in \adj{\Mf}(\W)$ 
        (i.e. $\tau + \adj{E} \lambda - z \geqas 0$ and $\adj{F} \lambda \geqas 0$).

        All that is left is to show strong duality (i.e. $\mathrm{val}(D) = \mathrm{val}(P) = \rme[z]$). This follows directly from
        coherence of $\rme$ (specifically $\rme$ being proper, implying consistency of \eqref{eq:rme-primal}), compactness of $\W$ and \cite[Cor.~3.1]{Shapiro2001}. 
\end{proof}        
\end{prepupdate}%
\end{arxiv}%

Note that constraints in the dual are robust constraints, since they hold for all $w \in \W$. Hence,
techniques from robust optimization enable finding tractable
reformulations.
\begin{example} \label{ex:moment-dual}
The ambiguity set ${\amb_{\beta}(\smomenth)}$ of \cref{lem:dd-moment} is conic
with    \update*{$n_b = 3n_w^2$,
        $E\mu = (\pm\langle R_w \hm{w}\, \trans{\hm{w}} \trans{R_w}, \mu \rangle)$, 
        $b = (R_w\hat{C}\trans{R}_w   \pm\beta I)$ and $\K = \psd{n_w+1} \times \psd{n_w+1}$.}

        Moreover, letting \update*{$\lambda = (\Lambda, \uLambda)$} with 
        $\Lambda, \uLambda \in \sym{n_w+1}$ and $\tau \in \Re$
        while using \cref{lem:cr-dual}, means\update*{
\begin{equation*}
        \begin{alignedat}{2}
                \rho_{\amb_{\beta}(\smomenth)}[z] = &\min_{\Lambda, \uLambda \sgeq 0, \tau} &\ilpub{\,}\ilarxiv{\qquad}& \tau 
                        + \tr [\Lambda (R_w \hat{C}\trans{R_w}  +\beta I)] + \tr[\uLambda ( R_w\hat{C}\trans{R}_w  -\beta I)]  \\
                &\sttshort && \tau + \notation*{\adj{E} \lambda \geqas z},                
        \end{alignedat}
\end{equation*}}
where the adjoint $\adj{E} \colon \Re^{n_b} \to \Z$, is\ilarxiv{ (cf. \eqref{eq:E-adj})}
\begin{align*}
        (\tau + \adj{E} \lambda)(w) &= \tau + \tr[R_w \hm{w} \,\trans{\hm{w}} \trans{R_w} (\Lambda - \uLambda)]  \\
                                     &= \trans{\hm{w}} \trans{R_w} (\Lambda - \uLambda) R_w \hm{w} + \tau.
\end{align*}
If 
the constraint $\update*{\tau} + \notation*{\adj{E} \lambda \geqas z}$ is LMI representable,
then $\rho_{\amb_{\beta}(\smomenth)}[z]$ can be evaluated by solving a SDP. 
For example if $z = \trans{\hm{w}} P \hm{w}$. Then, since $\trans{w} w \leq r^2$,
we can apply the S-Lemma \cite[Thm.~B.2.1.]{Ben-Tal2009} to show that $\update*{\tau} + \notation*{\adj{E} \lambda \geqas z}$ iff.,
\begin{align} \label{eq:moment-dual}
       \exists s \geq 0, \, \trans{R_w} (\Lambda - \uLambda) R_w + \diag(sI, \tau - sr^2) - P \sgeq 0.
\end{align}
\end{example}

We also consider ambiguity with only support constraints.
\begin{example} \label{ex:robust-dual}
        Ambiguity $\Pf(\W)$ is conic representable with \update*{$n_b = 0$}.
        Hence,
        $\rho_{\Pf(\W)}[z] = \min_{\tau} \{ \tau \colon \notation*{\tau \geqas z} \}, $
        corresponds to $\rho_{\Pf(\W)}[z] = \max_{w \in \W} z(w)$ and only considers the support as is common in robust optimization.
\end{example}

\section{Multi-stage problems} \label{sec:multi-stage}
In this section we show how conic single-stage risk can be extended to a multi-stage setting,
which is required to develop distributionally robust MPC controllers. 
Specifically, we will consider risk measures operating on the dynamics
\begin{equation*}
        x_{t+1} = f(x_t, u_t, w_t),
\end{equation*}
with $x_t \in \Re^{n_x}$ ($u_t \in \Re^{n_u}$) the state (input) and $w_t \in \Re^{n_w}$
the disturbance, which follows a random process. For $t \in \N_{0:N-1}$ 
we consider $\ell_t \colon \Re^{n_x} \times \Re^{n_u} \to \Re_{+}$ a stage cost
function, and $\ell_N \colon \Re^{n_x} \to \Re_{+}$ the terminal cost.

For each stage $t$, the trajectory up to that time $\bm{w}_{0:t-1}$ is an element of $\W^t$. For 
each $\W^t$, $\B^t$ is the accompanying Borel sigma-algebra, ($\Msf^t$) $\Mf^t$ the set of (signed) measures and $\Pf^t$ 
the set of probability measures on $(\W^t, \B^t$). 
For brevity we henceforth omit the explicit dependency on $\W^t$. Also consider the 
paired spaces of continuous functions $\Z_t = \Cont(\W^t)$. 

We can then consider \update*{multistage ambiguity sets $\amb^t$, which are nonempty, closed and convex subsets of $\Pf^t$.
These in turn define a multistage analog to risk measures\footnote{Multistage risk is often constructed using nested conditional risk measures. 
We avoid such a construction for conciseness and tractability. The consequences of this are discussed in \cite[\S4]{Pichler2021}.}, 
\emph{multistage risk measures} \cite[\S4.2]{Pichler2021}, $\rme_{\amb^t} \colon \Z_t \to \Re$.} Since this is simply a usual risk measure, but defined on $\Z_t$,
the properties of \ilpub{\cref{sec:notation}}\ilarxiv{\cref{lem:rme}} generalize. \shorten*{We specifically consider coherent multistage risk}
\begin{equation} \label{eq:prod-risk-dual}
        \rme_{\amb^t}[z_t] = \max_{\mu^t \in \amb^t} \<z_t, \mu^t\> = \max_{\mu^t \in \amb^t} \E_{\mu^t}[z_t].
\end{equation}

Given such risks, the goal is to solve, \update*{for a given $x_0$,}
\begin{subequations}\label{eq:ocproblem}%
        \begin{alignat}{2}%
                &\update*{\minimize_{\bm{\pi} \in \bm{\Pi}}} &\ilarxiv{\qquad}\ilpub{\quad}& \rme_{\amb^N}\left[ 
                        \sum_{t=0}^{N-1} \ell_t(x_t, \update*{\pi_t(\bm{w}_{:t-1})}) + \ell_N(x_N)
                 \right] \label{eq:ocproblem:cost} \\                 
               &\update*{\stt}& &x_{t+1} = f(x_t, \update*{\pi_t(\bm{w}_{:t-1})}, w_t), \, t \in \N_{0:N-1} \label{eq:ocproblem:a}\\ 
                &&&\rcn^t_{\amb}[\phi(x_t)] \leq 0, \, t \in \N_{1:N-1} \label{eq:ocproblem:b}\\
                &&&\psi(x_N) \leq 0 \text{ a.s.}, \label{eq:ocproblem:c}
        \end{alignat}%
\end{subequations}%
where $\bm{\Pi}$ denotes a set of parametrized, continuous, causal policies.
The risk constraints \eqref{eq:ocproblem:b} involve the multistage risk measures $\rcn^t_{\amb}$ and are discussed in 
detail in \cref{sec:rec-feasibility}. 
We illustrate how \eqref{eq:ocproblem} interpolates between the robust setting and \eqref{eq:socproblem} in
\cref{sec:affine_disturbance}. 
\begin{remark} \label{rem:timecons}
        Problem \eqref{eq:ocproblem} is not exact as we optimize over parametrized policies (cf. \cref{sec:affine_disturbance}), for tractability. 
        As such, time-consistency \cite{Shapiro2012} cannot be guaranteed (i.e. a policy
        computed at $t = 0$ may not be optimal at $t=1$ after realization of $w_0$). 
        Hence a receding horizon scheme is used. 
\end{remark}

\subsection{Product ambiguity} \label{sec:independence}
To enforce independence of the disturbances $w_t$ we introduce \emph{product ambiguity} \cite[\S4.2]{Pichler2021}.
For a sequence of single-stage ambiguity factors $\amb_i$ for $i \in \N_{0:t-1}$, consider
\begin{equation} \label{eq:rm-prod}
        \bigtimes_{i=0}^{t-1} \amb_i = \amb_0 \times \dots \times \amb_{t-1},
\end{equation}
where some $\mu^t \in \amb_0 \times \dots \times \amb_{t-1}$ 
if it is constructed as a product measure of some $\mu_i \in \amb_i$, for $i \in \N_{0:t-1}$ (denoted by $\mu^t = \mu_0 \times \dots \times \mu_{t-1}$). 
We show that in certain cases such ambiguities are conic representable. 

Before doing so we need to extend linear operators $E_i \colon \Msf \to \Re^{n_b}$ to take arguments in $\Msf^t$
in a natural way. To do so, note that for any $E_i \colon \Msf \to \Re^{n_b}$ and $\mu \in \Msf$ we have
$E_i\mu = \sint_{w\in W} e_i(w) \di \mu(w)$ for some $e_i \colon \W \to \Re^{n_b}$, by definition. 
Measures $\mu^t \in \Msf^t$ take arguments $\bm{w}_{:t-1} = (w_0, \ldots, w_{t-1})$, 
so we introduce $E|_i \colon \Msf^t \to \Re^{n_b}$ such that
\begin{equation} \label{eq:conditional-linear}
        E|_i \mu^t = \ilarxiv{\int}\ilpub{\sint}_{\W^t} e_i(w_i) \di \mu^t(\bm{w}_{:t-1}), \, \forall \mu^t \in \Msf^t.
\end{equation}

With these new operators we have
\begin{lemma} \label{lem:cr-dual-prod}
        \update*{Let $\amb_i$ be conic representable with parameters $E_i, b_i, \K_i$ for $i \in \N_{0:t-1}$. 
        Then $\times_{i=0}^{t-1} \amb_i$} is also conic representable
        with parameters
        \begin{align*}
                E\mu^t &=  (E|_0\mu^t, \dots, E|_{t-1}\mu^t),  &
                b &= (b_0, \dots, b_{t-1}),
        \end{align*}
        and $\K = \K_0 \times \dots \K_{t-1}$.
        Moreover, \ilpub{$\rme_{\times_{i=0}^{t-1} \amb_i}[z_t]$ equals}
        \begin{align*}%
                \update*{\ilarxiv{\rme_{\times_{i=0}^{t-1} \amb_i}[z_t] =} \min_{\lambda_i \geqc{\K_i^*} 0, \tau} 
                        \left\{  \tau + \ssum_{i=0}^{t-1} \notation*{b_i \cdot \lambda_i} \colon \tau +  \notation*{\ssum_{i=0}^{t-1} \adj{E|_i} \lambda_i \geqas z_t}\right\}.}
        \end{align*}%
\end{lemma}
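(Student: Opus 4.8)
The plan is to reduce the statement to a single application of \cref{lem:cr-dual}. First I would write the candidate conic representation out explicitly: set $\mathcal{Q} \dfn \{\mu^t \in \Pf^t : E\mu^t \leqc{\K} b\}$ with $E,b,\K$ as in the statement, and note that $E\mu^t \leqc{\K} b$ decomposes, block by block, into $E|_i\mu^t \leqc{\K_i} b_i$ for $i \in \N_{0:t-1}$. For a product measure $\mu^t = \mu_0 \times \dots \times \mu_{t-1}$ with $\mu_i \in \Pf(\W)$, the definition \eqref{eq:conditional-linear} together with $\mu_j(\W) = 1$ for all $j \neq i$ gives $E|_i\mu^t = E_i\mu_i$; hence $\mu_i \in \amb_i$ for all $i$ implies $\mu^t \in \mathcal{Q}$, i.e. $\times_{i=0}^{t-1}\amb_i \subseteq \mathcal{Q}$. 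I would then check that $\mathcal{Q}$ is a genuine ambiguity set in the sense of \cref{def:conic-risk}: it is nonempty (it contains all such products), convex (linearity of the $E|_i$ plus convexity of the $\K_i$), and weak$^*$-closed (it is the intersection of the closed set $\Pf^t$ with preimages of the closed sets $b_i - \K_i$ under the weak$^*$-continuous linear maps $E|_i$, exactly as argued after \cref{def:conic-risk}). This already gives the first assertion, that $\times_{i=0}^{t-1}\amb_i$ is conic representable with the stated parameters.

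Next I would assemble the ingredients that \cref{lem:cr-dual} needs. Writing $E_i\mu = \int_\W e_i(w)\di\mu(w)$ for the kernel of $E_i$, a computation as in \eqref{eq:E-adj}, carried through \eqref{eq:conditional-linear}, shows that $\adj{E|_i}\lambda_i \in \Z_t$ is the function $\bm{w}_{:t-1} \mapsto e_i(w_i)\cdot\lambda_i = (\adj{E_i}\lambda_i)(w_i)$, i.e. the single-stage adjoint evaluated on the $i$-th coordinate (it is continuous on the compact set $\W^t$, hence lies in $\Z_t$). Moreover $\K = \K_0 \times \dots \times \K_{t-1}$ has dual cone $\adj{\K} = \adj{\K_0} \times \dots \times \adj{\K_{t-1}}$, while $b\cdot\lambda = \sum_{i=0}^{t-1} b_i\cdot\lambda_i$ and $\adj{E}\lambda = \sum_{i=0}^{t-1}\adj{E|_i}\lambda_i$ for $\lambda = (\lambda_0,\dots,\lambda_{t-1})$. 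Since $\mathcal{Q}$ is nonempty, $\rme_{\mathcal{Q}}$ is proper, and $\W^t$ is compact, the hypotheses of \cref{lem:cr-dual} hold (with $F = 0$), and it yields exactly $\rme_{\mathcal{Q}}[z_t] = \min_{\lambda_i \geqc{\adj{\K_i}} 0,\,\tau}\{\tau + \sum_{i=0}^{t-1} b_i\cdot\lambda_i : \tau + \sum_{i=0}^{t-1}\adj{E|_i}\lambda_i \geqas z_t\}$, the displayed expression.

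What remains, and what I expect to be the main obstacle, is tying $\rme_{\mathcal{Q}}$ to $\rme_{\times_{i=0}^{t-1}\amb_i}$. The inclusion above immediately gives $\rme_{\times_{i=0}^{t-1}\amb_i}[z_t] \leq \rme_{\mathcal{Q}}[z_t]$, so the displayed minimum is at least a valid upper bound on the product risk — already enough for the bounding and recursive-feasibility arguments for which the lemma is later invoked. For equality one must argue that the maximization over $\mathcal{Q}$ draws no benefit from correlations across stages. This is transparent when $z_t$ is additively separable, $z_t(\bm{w}_{:t-1}) = \sum_{i=0}^{t-1} g_i(w_i)$ with $g_i \in \Z$: since every tuple of marginals $(\mu_0,\dots,\mu_{t-1}) \in \amb_0 \times \dots \times \amb_{t-1}$ is realized by the product measure $\mu_0 \times \dots \times \mu_{t-1} \in \mathcal{Q}$, both $\rme_{\mathcal{Q}}[z_t]$ and $\rme_{\times_{i=0}^{t-1}\amb_i}[z_t]$ equal $\sum_{i=0}^{t-1}\rme_{\amb_i}[g_i]$, and expanding each term by the single-stage \cref{lem:cr-dual} reassembles the claimed formula. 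The point the write-up must be explicit about is therefore the intended scope: either $\amb^t$ is taken as the conic over-approximation $\mathcal{Q}$ (which enforces independence conservatively but tractably), or attention is restricted to losses for which relaxing $\times_{i=0}^{t-1}\amb_i$ to $\mathcal{Q}$ is lossless; under either reading the dual value is exactly the one displayed.
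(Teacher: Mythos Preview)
Your core argument—show $E|_i\mu^t = E_i\mu_i$ for product measures via \eqref{eq:conditional-linear}, verify nonemptiness, closedness and convexity, then invoke \cref{lem:cr-dual}—is exactly the paper's proof. The paper is terser: after the marginal computation it simply asserts that $\times_{i=0}^{t-1}\amb_i$ is conic representable, notes that ``convexity and closedness follow from the arguments below \cref{def:conic-risk},'' and applies \cref{lem:cr-dual} directly.

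Where you go further is your final paragraph. The paper does not distinguish between the (non-convex) set of product measures $\times_{i=0}^{t-1}\amb_i$ and the convex conic set $\mathcal{Q}$ you define; it tacitly identifies the two. You are right that $\mathcal{Q}$ can contain non-product measures with the correct marginal moments, so that strictly speaking the dual formula yields $\rme_{\mathcal{Q}}[z_t]$, and $\rme_{\mathcal{Q}}[z_t] \geq \rme_{\times_{i}\amb_i}[z_t]$ in general. The paper's own convexity argument only applies to $\mathcal{Q}$, and the downstream uses of the lemma in \cref{sec:risk-cost-tractable,sec:risk-constraint-tractable} work with $\mathcal{Q}$, so your reading—that the conic relaxation is what is actually meant by the ``product ambiguity'' for the purposes of the dual—is the operative one. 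Your proposal is correct, follows the same route, and is more careful than the paper on this point.
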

\begin{proof}%
\begin{arxiv}%
        Let $\mu^t = \mu_0 \times \dots \times \mu_{t-1}$, with $\mu_i \in \Pf_i$. Then,
        following the notation in \eqref{eq:conditional-linear},
        \begin{align*}
                E|_i\mu^t &\dfn \int_{\W^t} e_i(w_i) \di \mu^t(\di \bm{w}_{0:t-1})\\
                 &\labelrel={step:prodexp} \int_{\W} \dots \int_{\W} e_i(w_i) \di \mu_0(w_0) \dots \di \mu_{t-1}(w_{t-1}) 
                = \int_{\W} e_i(w_i) \di \mu_i(w_i),
        \end{align*}
        where \ref{step:prodexp} follows from $\mu^t = \mu_0 \times \dots \times \mu_{t-1}$ and $\mu^t \in \Pf^t$. 
        Hence $E|_i \mu \leqc{\K_i} b_i$ iff $E_i \mu_i \leqc{\K_i} b_i$. Repeating the same argument 
        for each $i$ proves that $\amb^t \dfn \times_{i=0}^{t-1} \amb_i$ is conic representable. 
        Since $\amb_i$ are all non-empty, $\amb^t$ is also nonempty. Convexity and closedness follow
        from the arguments below \cref{def:conic-risk}. The dual then follows from applying \cref{lem:cr-dual}.
\end{arxiv}%
\ilpub{The proof follows directly from \eqref{eq:conditional-linear} (cf. \cite{Arxiv}).}
\end{proof}%

\subsection{Risk constraints} \label{sec:risk-constraints}
Ideally constraints like \eqref{eq:ocproblem:b} would require the state to lie within some 
set almost surely. Since such a constraint in a stochastic setting can be very conservative,
we will instead implement average value-at-risk constraints, for $\alpha \in (0, 1)$,
\begin{equation} \label{eq:avar}
        \AVAR_{\alpha}^\mu[z] \dfn \inf_{\tau\in\Re} \left\{ \tau + \alpha^{-1} \E_{\mu}[z - \tau]_+ \right\} \leq 0.
\end{equation}
Such constraints \emph{(i)} act as a convex relaxation of chance constraints \cite{Shapiro2009}; \emph{(ii)} penalize the expected violation
in the $\alpha$ quantile where violations do occur. In control applications \eqref{eq:avar} is natural, since it penalizes large violations more.

To evaluate the expectation in \eqref{eq:avar}, true knowledge about the distribution is needed. Hence, we will operate
on the distributionally robust $\AVAR$ constraint instead:
\begin{equation} \label{eq:ravar}
        \ra{\amb}_{\alpha}[z] \dfn \max_{\mu \in \amb} \AVAR_{\alpha}^\mu[z] \leq 0,
\end{equation}
with $\amb$ the \emph{core ambiguity}. 
If $\amb$ satisfies \eqref{eq:amb_conf}, then \eqref{eq:ravar} implies the chance constraint $\prob[z_t \leq 0] \geq 1 - \varepsilon$
holds with $1-\varepsilon \leq (1-\delta)(1-\alpha)$. 
Moreover, whenever $\amb$ is conic, then robust $\AVAR$ is $\nu$-conic.

\newcommand{\core}[1]{#1_{\mathbf{c}}}
\begin{lemma} \label{lem:ravar-conic}
        Let $\amb$ be conic with parameters $\core{E}, \core{b}, \core{\K}$.
        Then $\ra{\amb}_{\alpha}$ in \eqref{eq:ravar}
        is $\nu$-conic with
        \begin{prepupdate}\begin{align*}
                E\mu &= (\core{E}\mu, \<1, \mu\>),  &
                F\nu &= (\core{E}\nu, \<1, \nu\>), &
                b &= (\core{b}, 1)/\alpha,
        \end{align*}\end{prepupdate}
        and $\K = \core{\K} \times \{0\}$. Moreover, $\ra{\rho}_{\alpha}[z]$ equals
        \begin{arxiv}%
        \begin{equation}  \label{eq:ravar-dual}
                \min_{\lambda \geqc{\adj{\core{\K}}} 0, \tau, \core{\tau}} \left\{  
                        \tau + \alpha^{-1} (\core{\tau} + \core{b} \cdot \lambda)
                        \colon \adj{\core{E}}\lambda + \core{\tau} \geqas 0,\, \adj{\core{E}}\lambda + \core{\tau} + \tau \geqas z \right\}.
        \end{equation}%
        \end{arxiv}
        \begin{pub}%
        \begin{equation}  \label{eq:ravar-dual}
                \begin{alignedat}{2}
                        &\min_{\lambda \geqc{\adj{\core{\K}}} 0, \tau, \update*{\core{\tau}}} &\qquad& \tau + \alpha^{-1} (\update*{\core{\tau}} 
                                + \notation*{\core{b} \cdot \lambda}) \\
                        &\sttshort && \notation*{\adj{\core{E}}\lambda + \update*{\core{\tau}} \geqas 0},\, \notation*{\adj{\core{E}}\lambda 
                                + \update*{\core{\tau}} + \tau \geqas z}.                        
                \end{alignedat}
        \end{equation}%
        \end{pub}
\end{lemma}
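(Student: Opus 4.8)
The plan is to collapse the robust average value-at-risk into a single coherent risk whose ambiguity set is $\nu$-conic representable, and then invoke \cref{lem:cr-dual}. The starting point is the \emph{risk-envelope} (dual) form of $\AVAR$: for every $\mu \in \Pf(\W)$,
\[
        \AVAR_{\alpha}^{\mu}[z] = \max\bigl\{\, \<z, \mu'\> \;:\; \mu' \in \Pf(\W),\ \alpha^{-1}\mu - \mu' \geqas 0 \,\bigr\},
\]
which is classical (it is the Lagrangian dual of the convex program \eqref{eq:avar}; the feasible set is weak$^*$ compact since $\W$ is compact, so the maximum is attained; see, e.g., \cite{Shapiro2009}). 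Substituting this into \eqref{eq:ravar} and observing that $\<z,\mu'\>$ is independent of $\mu$, the two maximizations merge,
\[
        \ra{\amb}_{\alpha}[z] = \max\bigl\{\, \<z, \mu'\> \;:\; \mu' \in \Pf(\W),\ \exists\,\mu\in\amb,\ \alpha^{-1}\mu - \mu' \geqas 0 \,\bigr\} = \rme_{\amb_{\mathbf{r}}}[z],
\]
with core-dependent ambiguity $\amb_{\mathbf{r}} \dfn \{\mu' \in \Pf(\W) : \exists\,\mu\in\amb,\ \alpha^{-1}\mu - \mu' \geqas 0\}$.

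Next I would make the $\nu$-conic structure of $\amb_{\mathbf{r}}$ explicit through the change of variables $\nu \dfn \alpha^{-1}\mu - \mu'$, equivalently $\mu = \alpha(\mu'+\nu)$. Since $\mu' \geqas 0$ is already imposed, the density bound is equivalent to $\nu \geqas 0$, i.e.\ $\nu \in \Mf(\W)$. Because $\amb$ is conic with data $\core{E}, \core{b}, \core{\K}$ — so $\mu \in \amb$ iff $\core{E}\mu \leqc{\core{\K}} \core{b}$ and $\<1,\mu\> = 1$, nonnegativity of $\mu$ being automatic — and because $\core{\K}$ is a cone (scaling by $\alpha^{-1}>0$ preserves membership), substituting $\mu = \alpha(\mu'+\nu)$ rewrites these as $\core{E}(\mu'+\nu) \leqc{\core{\K}} \alpha^{-1}\core{b}$ and $\<1,\mu'\> + \<1,\nu\> = \alpha^{-1}$. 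Checking both directions of this equivalence shows $\amb_{\mathbf{r}} = \{\mu' \in \Pf(\W) : \exists\,\nu\in\Mf(\W),\ E\mu'+F\nu \leqc{\K} b\}$ for exactly the $E, F, b, \K$ stated in \cref{lem:ravar-conic}, so $\amb_{\mathbf{r}}$ is $\nu$-conic representable. I would also confirm it is a genuine ambiguity set: it is nonempty (for any $\mu \in \amb$, the pair $\mu' = \mu$, $\nu = (\alpha^{-1}-1)\mu \geqas 0$ is feasible), and closed and convex by the discussion following \cref{def:conic-risk}.

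Finally I would dualize by applying \cref{lem:cr-dual} to $\amb_{\mathbf{r}}$. Since $\K = \core{\K} \times \{0\}$, the dual cone is $\adj{\core{\K}} \times \Re$, so the multiplier of \cref{lem:cr-dual} splits as $(\lambda, \core{\tau})$ with $\lambda \geqc{\adj{\core{\K}}} 0$ and $\core{\tau} \in \Re$ free; then $b \cdot (\lambda,\core{\tau}) = \alpha^{-1}(\core{b}\cdot\lambda + \core{\tau})$ while $\adj{E}(\lambda,\core{\tau}) = \adj{\core{E}}\lambda + \core{\tau}$ and $\adj{F}(\lambda,\core{\tau}) = \adj{\core{E}}\lambda + \core{\tau}$. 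Substituting these, along with the auxiliary scalar $\tau$ from \cref{lem:cr-dual}, into its generic dual reproduces \eqref{eq:ravar-dual} verbatim; strong duality is inherited from \cref{lem:cr-dual}, whose hypotheses hold because $\ra{\amb}_{\alpha} = \rme_{\amb_{\mathbf{r}}}$ is a coherent risk built from a valid ambiguity set over the compact set $\W$.

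The step I expect to be the main obstacle is not the conic duality itself — that is a mechanical specialization of \cref{lem:cr-dual} — but the envelope identity for $\AVAR$ together with the bookkeeping in the substitution $\nu = \alpha^{-1}\mu - \mu'$: keeping the factor $\alpha$, the sign of the density bound, and the mass constraint $\<1,\mu'\> + \<1,\nu\> = \alpha^{-1}$ consistent with the stated $b = (\core{b},1)/\alpha$ and $\K = \core{\K}\times\{0\}$.
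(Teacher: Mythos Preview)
Your argument is correct, but it proceeds along a different path than the paper's. The paper starts from the infimum formulation \eqref{eq:avar}, invokes a minimax theorem \cite[Thm.~2.1]{Shapiro2002} to swap $\max_{\mu\in\amb}$ and $\inf_{\tau}$, then applies \cref{lem:cr-dual} to the inner $\rho_{\amb}[z-\tau]_{+}$ to obtain \eqref{eq:ravar-dual} directly; only afterwards does it read off the $\nu$-conic parameters by applying \cref{lem:cr-dual} in reverse and verifying that the induced set is a valid ambiguity. You instead begin from the risk-envelope representation of $\AVAR$, merge the two maximizations into a single coherent risk $\rme_{\amb_{\mathbf{r}}}$, exhibit the $\nu$-conic description of $\amb_{\mathbf{r}}$ via the substitution $\nu=\alpha^{-1}\mu-\mu'$, and then invoke \cref{lem:cr-dual} once to recover the dual. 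Your route is arguably more streamlined---it identifies the ambiguity first and dualizes last, avoiding the explicit minimax step and its technical verifications---but it imports the envelope form of $\AVAR$ as a black box, which is itself a duality result equivalent in content to the paper's minimax swap. Both proofs ultimately hinge on the same interchange; they just package it differently.
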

\begin{proof}%
        \begin{arxiv}%
                This proof generalizes the methodology of \cite{Zymler2013} to arbitrary conic representable risk. 
                First note that
                \begin{align}
                        &\max_{\mu \in \amb} \inf_{\tau\in\Re} \left\{ \tau + \alpha^{-1} \E_{\mu}[z - \tau]_+ \right\} \label{eq:maxmin-ravar}  
                                =\inf_{\tau\in\Re} \left\{ \tau + \alpha^{-1} \notation*{\rho_{\amb}}[z - \tau]_+ \right\}, 
                \end{align}
                \update*{by \cite[Thm.~2.1]{Shapiro2002}.
                Specifically let $\phi(\tau, w) = \tau + \alpha^{-1} [z(w) - \tau]_+$. Then we have \emph{(i)} $\phi(\tau, \cdot) \in \Z$,
                implying that it is $\mu$ integrable and measurable; \emph{(ii)} $\phi(\cdot, w)$ is convex for any $w \in \W$; \emph{(iii)} $\rho_{\notation*{\amb}}[z -\tau]_+$ 
                is finite (\cref{lem:rme}); \emph{(iv)} the set $\amb \subseteq \Pf(\W)$ is compact (\cref{lem:rme});
                \emph{(v)} $\phi(\tau, \cdot)$ is continuous and hence bounded for any $\tau \in \Re$ on $\W$. Under these properties as well as $\amb$ 
                being convex, \cite[Thm.~2.1]{Shapiro2002} states that strong duality holds,
                allowing us to exchange the inf and the max.} 

                Applying \cref{lem:cr-dual} to \notation*{$\rho_{\amb}$} on the r.h.s., results in \eqref{eq:ravar-dual} 
                (where $[\cdot]_+$ produces two separate constraints and $\core{\tau}$, $\core{\lambda}$ act as the Lagrangian multipliers 
                for the constraint $\core{\mu} \in \core{\amb}$).
                Again applying \cref{lem:cr-dual} gives the original $\nu$-conic representation. 
                
                The second application of \cref{lem:cr-dual} requires the resulting set of measures (denoted $\amb_{\alpha}$ below)
                to be a nonempty, closed and convex subset of $\Pf(\W)$. 
                By construction we already have $\amb_{\alpha} \subseteq \Pf(\W)$. 
                Next, we show that $\amb_{\alpha}$ is larger than $\amb$. After all, for any $\core{\mu} \in \amb$, 
                take $\mu = \core{\mu}$ and $\nu = \alpha^{-1}(1-\alpha)\core{\mu} \geqas 0$, since $\alpha \in (0, 1)$. Moreover, since $\core{\K}$ is a cone,
                \begin{equation*}
                        \core{E}\mu + \core{E}\nu \leqc{\core{\K}} \core{b}/\alpha \quad \Leftrightarrow \quad 
                        \core{E}(\alpha \mu + \alpha \nu) \leqc{\core{\K}} \core{b},
                \end{equation*}
                with $\alpha\mu + \alpha\nu = \alpha \core{\mu} + (1-\alpha)\core{\mu} = \core{\mu}$. For the same reason we have 
                $\<1, \mu\> + \<1, \nu\> = \alpha^{-1} \<1, \core{\mu}\> = \alpha^{-1}$.
                Therefore $\core{\mu} \in \amb_{\alpha}$ for each $\core{\mu} \in \amb$. Hence, since $\amb$ is nonempty, so is $\amb_{\alpha}$. 
                Closedness and convexity then follow by the arguments below \cref{def:conic-risk}. So using \cref{lem:cr-dual} is justified. 
        \end{arxiv}%
        \begin{pub}%
                This proof generalizes \cite{Zymler2013} to arbitrary conic representable risk and uses \cite[Thm.~2.1]{Shapiro2002} (cf. \cite{Arxiv}).        
        \end{pub}%
\end{proof}

\section{Recursive feasibility} \label{sec:rec-feasibility}
We show how one can configure the constraints of \eqref{eq:ocproblem} such that recursive feasibility 
is ensured. To do so we assume
\begin{enumerate}[label=(A\arabic*), ref=(A\arabic*)]
        \item \label{asm:risk} $\mathcal{r}^t_{\notation*{\amb}}[z_t] \dfn \ra{{\Pf^{t-1} \times \amb}}_{\alpha}[z_t]$, $\forall t \in \N_{0:N-1}, z_t \in \Z^t$;
        \item \label{asm:amb} $\amb$ is updated based on measurements as $g(\amb, w)$ (e.g., following \cref{lem:dd-moment}) and $\amb^+ \dfn g(\amb, w) \subseteq \amb$ a.s.
\end{enumerate}

We introduce the terminal set $\mathcal{X}_N \dfn \left\{ x \colon \psi(x) \leq  0 \right\}$.
Let $\mathcal{V}^{\amb}_N(x_0)$ denote the minimum of \eqref{eq:ocproblem} for some $x_0$ and let $\mathcal{D}_N(\amb)$
denote its domain. Then consider the set of feasible policies 
$
        \bm{\Pi}_N(x_0, \amb) \dfn \left\{ 
                \bm{\pi} \in \bm{\Pi} \colon \eqref{eq:ocproblem:a}, \eqref{eq:ocproblem:b}, \eqref{eq:ocproblem:c}
         \right\}.
$

We begin with the following definition.
\begin{definition}[Recursive Feasibility] \label{def:rfeas}
        Let $x_0 \in \mathcal{D}_N(\amb)$ and $\bm{\pi} \in \bm{\Pi}_N(x_0, \amb)$. If,
        $f(x_0, \pi_0, w_0) \in \mathcal{D}_N(g(\amb, w_0))$ a.s., then \eqref{eq:ocproblem} is \emph{recursive feasible} (RF).        
\end{definition}

We can then prove the following theorem:
\begin{theorem} \label{thm:rf}
        Assume \cref{asm:risk}, \cref{asm:amb} and that we are given some terminal policy $\pi_f(x_N)$ such that
        \begin{enumerate}[label=(A\arabic*), ref=(A\arabic*), resume] 
                \setcounter{enumi}{2}
                \item \label{asm:term} $\mathcal{X}_N \subseteq \{x \in \Re^{n_x} \colon \phi(f(x, \pi_f(x), w)) \leq 0, \, \forall w \in \W\}$;
                \item \label{asm:rci} $\mathcal{X}_N$ is robust positive invariant (RPI) for $\pi_f$ (i.e., $f(x, \pi_f(x), w) \in \mathcal{X}_N$
                for each $(x, w) \in \mathcal{X}_N \times \W$);
                \item \label{asm:param} $\forall \bm{\pi}_{0:N-1} \in \bm{\Pi}$, let $\pi_N = \pi_f(x_N)$,
                depending on $\bm{w}_{:N}$ through $x_N$. Then the shifted policy $\bm{\pi}^+_{0:N-1} = \bm{\pi}_{1:N}(w_0, \cdot)$ for any fixed $w_0 \in \W$,
                lies in $\bm{\Pi}$.
        \end{enumerate}
        Then, \eqref{eq:ocproblem} is recursive feasible. 
\end{theorem}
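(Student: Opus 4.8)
The plan is to show recursive feasibility in the sense of \cref{def:rfeas} by exhibiting, for an arbitrary feasible initial condition $x_0 \in \mathcal{D}_N(\amb)$ and feasible policy $\bm{\pi} \in \bm{\Pi}_N(x_0, \amb)$, an explicit candidate policy for the shifted problem that is feasible for the updated ambiguity $\amb^+ = g(\amb, w_0)$ and initial state $x_0^+ = f(x_0, \pi_0, w_0)$, for almost every realization $w_0 \in \W$. The natural candidate is the standard "shift-and-append-terminal-policy" construction: set $\bm{\pi}^+_{0:N-1} = \bm{\pi}_{1:N}(w_0, \cdot)$ where the $N$-th element is taken to be $\pi_N = \pi_f(x_N)$. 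By \cref{asm:param} this candidate lies in $\bm{\Pi}$, so it remains only to verify the three families of constraints \eqref{eq:ocproblem:a}--\eqref{eq:ocproblem:c} for the shifted problem.

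First I would handle the dynamics \eqref{eq:ocproblem:a}: these hold for the shifted trajectory essentially by construction, since the shifted state sequence is just the tail $x_1, \dots, x_N$ of the original closed-loop trajectory (with $w_0$ fixed to its realization) extended by one step $x_{N+1} = f(x_N, \pi_f(x_N), w_N)$, and each one-step relation is inherited from the original problem or is the definition of $\pi_f$. Next, the terminal constraint \eqref{eq:ocproblem:c}: the original feasibility gives $\psi(x_N) \leq 0$ a.s., i.e. $x_N \in \mathcal{X}_N$ a.s.; then \cref{asm:rci} (RPI of $\mathcal{X}_N$ under $\pi_f$) gives $x_{N+1} = f(x_N, \pi_f(x_N), w_N) \in \mathcal{X}_N$ for all $w_N \in \W$, hence $\psi(x_{N+1}) \leq 0$ a.s. — this is the new terminal constraint.

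The main work is the risk constraints \eqref{eq:ocproblem:b}, and this is where I expect the real obstacle to lie. Two things must be checked. (a) The shifted-index constraints $\rcn^{t}_{\amb^+}[\phi(x_{t+1})] \le 0$ for $t \in \N_{1:N-2}$ must follow from the original constraints $\rcn^{t+1}_{\amb}[\phi(x_{t+1})] \le 0$. Using \cref{asm:risk}, $\rcn^{t+1}_{\amb}[\phi(x_{t+1})] = \ra{\Pf^{t} \times \amb}_\alpha[\phi(x_{t+1})]$ and $\rcn^{t}_{\amb^+}[\phi(x_{t+1})] = \ra{\Pf^{t-1} \times \amb^+}_\alpha[\phi(x_{t+1})]$; since $\amb^+ \subseteq \amb$ a.s. by \cref{asm:amb} and since dropping one leading $\Pf$ factor while conditioning on the realized $w_0$ only shrinks the set of distributions being maximized over, one gets that the shifted risk is no larger — monotonicity of $\sup$ over a smaller ambiguity set. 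I would make the "conditioning on $w_0$" step precise by noting the product structure: for a product measure $\mu_0 \times \mu^{t}$, fixing $w_0$ and integrating the remaining factors reproduces exactly the risk of $\phi(x_{t+1})$ viewed as a function on $\W^{t}$ with the tail ambiguity $\Pf^{t-1}\times\amb$, which contains $\Pf^{t-1}\times\amb^+$. (b) The new constraint at the terminal end, $\rcn^{N-1}_{\amb^+}[\phi(x_N)] \le 0$: here the original problem gives no risk constraint on $x_N$ (the index set in \eqref{eq:ocproblem:b} stops at $N-1$, and $x_N$ only carries the a.s. terminal constraint), so this must come from \cref{asm:term}. Indeed $x_{N-1}^+ = x_N \in \mathcal{X}_N$ a.s., and \cref{asm:term} says that for every $x \in \mathcal{X}_N$ and every $w$, $\phi(f(x,\pi_f(x),w)) \le 0$; hence $\phi(x_{N+1}^+) \le 0$ holds robustly, i.e. for all realizations, which makes $\rcn^{N-1}_{\amb^+}[\phi(\cdot)] = \ra{\Pf^{N-2}\times\amb^+}_\alpha[\cdot] \le 0$ trivially (a quantity that is $\le 0$ pointwise has nonpositive $\AVAR$ under any distribution, by monotonicity and translation equivariance from \cref{lem:rme}, or directly from $[z-\tau]_+$ with $\tau = 0$).

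Assembling these, the candidate shifted policy lies in $\bm{\Pi}_N(x_0^+, \amb^+)$ a.s., so $x_0^+ \in \mathcal{D}_N(\amb^+) = \mathcal{D}_N(g(\amb,w_0))$ a.s., which is exactly \cref{def:rfeas}. The step I expect to be most delicate to write carefully is (a) — matching indices across the shift and justifying the inequality $\rcn^{t}_{\amb^+}[\phi(x_{t+1})] \le \rcn^{t+1}_{\amb}[\phi(x_{t+1})]$ rigorously, since it simultaneously uses the product structure of the multistage ambiguity, the conditioning on the realized $w_0$, and the set inclusion $\amb^+\subseteq\amb$; everything else is either bookkeeping or a direct appeal to \cref{asm:rci}, \cref{asm:term} and \cref{asm:param}.
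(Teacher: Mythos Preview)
Your overall strategy matches the paper's exactly: construct the shifted candidate by appending $\pi_f$, verify the dynamics by construction, the terminal constraint via \cref{asm:rci}, membership in $\bm{\Pi}$ via \cref{asm:param}, and inherit the intermediate risk constraints by monotonicity after fixing $w_0$ and shrinking $\amb$ to $\amb^+$. Your part (a) is precisely the argument the paper gives; there the product integral is written out explicitly, the $\Pf^{t-1}$ factors are converted into a robust $\max$ over $\W^{t-1}$ via \cref{ex:robust-dual}, and one coordinate is then specialized to the realized $w_0$ to obtain the inequality $\rcn^{t}_{\amb}[\phi(x_t)]\ge \rcn^{t-1}_{\amb^+}[\phi(x_{t-1}^+)]$.

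Part (b), however, has an index slip that invalidates the argument as written. You correctly state the constraint to be checked, $\rcn^{N-1}_{\amb^+}[\phi(x_{N-1}^+)]\le 0$ with $x_{N-1}^+=x_N$. But applying \cref{asm:term} to $x_N\in\mathcal{X}_N$ yields $\phi(f(x_N,\pi_f(x_N),w))\le 0$, i.e.\ $\phi(x_N^+)=\phi(x_{N+1})\le 0$ --- one stage too far. In the shifted sequence the terminal policy sits at $\pi_{N-1}^+=\pi_f(x_{N-1}^+)$, so \cref{asm:term} controls $\phi$ at the \emph{successor} state $x_N^+$, not at $x_{N-1}^+$ itself; your symbol ``$x_{N+1}^+$'' is not even a state of the shifted problem. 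The risk you then declare nonpositive is therefore attached to the wrong random variable. What must actually be shown is that $\phi(x_N)\le 0$ pointwise over $\W^{N-1}$ (after which coherence of $\rcn^{N-1}_{\amb^+}$ gives the bound trivially). The paper itself is terse at this very point --- it says only that for $t=N-1$ one relies on \ref{asm:term} and \ref{asm:rci} --- so this is exactly the step that deserves careful bookkeeping, not the throwaway treatment you give it.
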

\begin{proof}%
\begin{arxiv}%
        We will consider any fixed $w_0 \in \W$ and show that, given that \eqref{eq:ocproblem} 
        is feasible for some $x_0$, it will also be feasible for the next time step starting from $x_0^+ = f(x_0, \pi_0, w_0)$ (cf.~\cref{def:rfeas}).
        Here, we consider the feasible policy $\bm{\pi}_{0:N-1} \in \bm{\Pi}_N(x_0, \amb)$, to which we append $\pi_N = \pi_f(x_N)$.
        Propagating the dynamics with this policy gives the sequence of states $\bm{x}_{0:N+1}$, depending on $\bm{w}_{:N}$ through \eqref{eq:ocproblem:a}.
        
        We then define the shifted sequence of states as
        \begin{align*}
                \bm{x}_{0:N}^+(\bm{w}_{1:N-1}) \dfn (\bm{x}_1(w_0, \bm{w}_{1:N}), \dots, \bm{x}_{N+1}(w_0, \bm{w}_{1:N})),
        \end{align*}
        where $w_0$ is considered fixed and $\bm{w}_{1:N}$ is left variable. 
        We can analogously define the shifted policy $\bm{\pi}^+_{0:N-1}$ as 
        \begin{equation*}
                \bm{\pi}_{0:N-1}^+(\bm{w}_{1:N-1}) \dfn (\bm{\pi}_1(w_0, \bm{w}_{1:N-1}), \dots, \bm{\pi}_{f}(x_N(w_0, \bm{w}_{1:N-1}))).
        \end{equation*}
        By construction, these shifted sequences satisfy \eqref{eq:ocproblem:a} and we can consider risk measures over (continuous) functions of these, where integration
        is performed over $\bm{w}_{1:N}$.

        Using this coupling between the feasible problem and the shifted problem we show $\bm{\pi}^+ \in \bm{\Pi}_N(x_0, \amb_+)$. That is, the candidate policy $\bm{\pi}^+$
        is feasible for the shifted problem.

        \begin{enumerate}[label=\Roman*:, ref=\Roman*]
                \item By \ref{asm:param}, $\bm{\pi}^+ \in \bm{\Pi}$;
                \item We show that \eqref{eq:ocproblem:b} at $t$ implies \eqref{eq:ocproblem:b} in the shifted problem at $t-1$. 
                That is, $\rcn^t_{\amb}[\phi(x_t)] \geq \rcn^{t-1}_{\amb}[\phi(x_{t-1}^+)]$ for any $w_0 \in \W$. So, letting $z = \phi(x_t)$, by \eqref{eq:maxmin-ravar},
                \begin{align*}
                        \rcn^t_{\amb}[z] &= \max_{\mu^t \in \Pf^{t-1} \times \amb} \inf_{\tau \in \Re} \left\{ \tau + \alpha^{-1} \E_{\mu^t}[z_{\tau}] \right\} 
                                        = \inf_{\tau \in \Re} \left\{ \tau + \alpha^{-1} \rho_{\Pf^{t-1} \times \amb}[z_{\tau}] \right\},
                \end{align*}
                with $z_{\tau} = [z - \tau]_+$. 
                For $z^+ \dfn \phi(x_{t-1}^+) = z(w_0, \bm{w}_{1:t-1})$ ($z^+_{\tau} = [z^+ - \tau]_+$),  we replace $\rho_{\Pf^{t-1} \times \amb}[z_{\tau}]$ 
                with $\rho_{\Pf^{t-2} \times \amb}[z_{\tau}^+]$ 
                for $\rcn^{t-1}_{\amb}[z^+]$. 
                Writing out $\rho_{\Pf^{t-1} \times \amb}$ gives
                \begin{subequations}\label{eq:integrals}
                        \begin{align} 
                                & \max_{\mu^{t} \in \Pf^{t-1} \times \amb} \left\{ \int_{\W^t} z_{\tau} (\bm{w}_{:t-1}) \di \mu^t(\bm{w}_{:t-1}) \right\}\nonumber\\[1em]
                                                &\labelrel={eq:integrals:b} \max_{\mu_{t-1} \in \amb} \max_{\mu^{t-1} \in \Pf^{t-1}} \left\{ \int_{\W^{t-1}} \smashoverbracket{\left(
                                                        \int_{\W} z_{\tau}(\bm{w}_{:t-1})  \di \mu_{t-1}(w_{t-1}) \right)
                                                }{h(\bm{w}_{:t-2})} \di \mu^{t-1}(\bm{w}_{:t-2}) \right\}\nonumber \\
                                                &\labelrel={eq:integrals:c} \max_{\mu_{t-1} \in \amb} \max_{\bm{w}_{:t-2} \in \W^{t-1}} h(\bm{w}_{:t-2}) 
                                                \labelrel\geq{eq:integrals:d} \max_{\mu_{t-1} \in \amb} \max_{\bm{w}_{1:t-2} \in \W^{t-2}} h(w_0, \bm{w}_{1:t-2}).\nonumber
                        \end{align}
                \end{subequations}             
                Noting that $\mu^t = \mu^{t-1} \times \mu_{t-1}$ with $\mu^{t-1} \in \Pf^{t-1}$ and $\mu_{t-1} \in \amb$, before splitting up the max and the integrals,
                gives \ref{eq:integrals:b}. The inner integral (i.e. $h(\bm{w}_{:t-2})$) then acts as a continuous random variable  $\W^{t-1} \to \Re$ for 
                any fixed $\mu_{t-1}$ (cf. App.~\ref{app:continuity}). 
                Hence we can apply the reasoning within \cref{ex:robust-dual} to maximize over $\bm{w}_{:t-2} \in \W^{t-1}$ instead of over measures
                resulting in \ref{eq:integrals:c}. It is clear that, fixing the value of $w_0$ results in the inequality \ref{eq:integrals:d}.
                Reverting the steps \ref{eq:integrals:b} and \ref{eq:integrals:c} to get a maximization over $\mu^{t-1} \in \Pf^{t-2} \times \amb$ shows 
                that the final expression after \ref{eq:integrals:d} equals $ \rho^+_{\amb^{t-1}}[z^+ - \tau]_+$. Hence
                $\rho_{\Pf^{t-1} \times \amb}[z-\tau]_+ \geq \rho_{\Pf^{t-2} \times \amb}[z^+ - \tau]_+$ for all $\tau \in \Re$, $z \in \Z^t$ and $w_0 \in \W$. Therefore 
                $\rcn^t_{\amb}[\phi(x_t)] \geq \rcn^{t-1}_{\amb}[\phi(x_{t-1}^+)]$. Since $\amb^+ \subseteq \amb$ by \ref{asm:amb}, 
                $\rcn^t_{\amb}[\phi(x_t)] \geq \rcn^{t-1}_{\amb^+}[\phi(x_{t-1}^+)]$. Hence \eqref{eq:ocproblem:b} holds for all $t \in \N_{1:N-2}$
                in the shifted problem. For $t = N-1$ we rely on \ref{asm:term} and \ref{asm:rci}.

                \item The terminal constraint \eqref{eq:ocproblem:c} follows directly from \ref{asm:rci}. 
        \end{enumerate}
        We have thus shown that $\bm{\pi}^+$ is a feasible policy.
\end{arxiv}%
\ilpub{The proof follows the usual recursive feasibility proofs from robust control (cf. \cite{Arxiv}).}
\end{proof}
\begin{remark}
        Note that \cref{asm:risk} is essential since RF is a robust property, holding a.s. It acts as a convex relaxation of chance constraints 
        conditioned on previous time steps (i.e., $\prob[\phi(x_t) \leq 0 \mid x_{t-1}] \leq \varepsilon$
        which holds a.s., hence $\forall \mu^{t-1} \in \Pf^{t-1}$ by \cref{ex:robust-dual}). 
        Due to the reduction of the policy space $\bm{\Pi}$ (cf. \cref{rem:timecons}) it is harder to satisfy such constraints for larger $t$. 
        Other (less conservative) reformulations exist in the stochastic MPC literature, which 
        impose all constraints at the first time step using a (maximal) RPI set (cf. \cite{Lorenzen2017}).
\end{remark}

\section{Affine disturbance feedback} \label{sec:affine_disturbance}
To make the reformulations above more concrete, we show how \eqref{eq:ocproblem}
is solved. In general this is intractable, since we need to optimize over
infinite dimensional policies $\bm{\pi}$, under robust constraints associated with the 
risks (cf. \cref{rem:timecons}). Hence, we use affine disturbance feedback. The resulting optimization problem is a SDP. 
Different ambiguity sets and policies would give other reformulations (e.g., \cite{Lu2020}, \cite{Sopasakis2019}). 

Consider linear dynamics, quadratic losses and constraints:
\begin{align*}
        f(x, u, w) &= Ax + Bu + Ew, & \pi_f(x) &= K_f x, \\
        \ell_t(x, u) &= \trans{x} Q x + \trans{u} R u, & 
        \ell_N(x) &= \trans{x} Q_f x, \\
        \phi(x) &=     \trans{x} G x + 2 \trans{g} x + \gamma, \ilpub{\span\span\span\\}\ilarxiv{&}
        \psi(x) &= \trans{x} G_{f} x + 2 \trans{g_{f}} x + \gamma_{f}. \ilpub{\span\span\span}
\end{align*}
with $Q \sgeq 0$, $R \sgeq 0$, $Q_f \sgeq 0$, $G \sgeq 0$, $G_{f} \sgeq 0$. 
We could include (hard) input constraints as well or multiple state constraints (cf. discussion in \cite[\S1]{VanParys2015} on modeling 
joint chance constraints), but abstain from doing so for conciseness.

In this setting affine disturbance feedback \cite{Goulart2005} has been applied to solve many robust optimal control problems
(and even some DRO problems \cite{VanParys2015}). The idea is to let
\begin{equation*}
        \bm{\pi}(\bm{w}) = \bm{F} \bm{w} + \bm{f},
\end{equation*}
where \ilpub{mapping }$\bm{F} \colon \Re^{(N+1)n_w} \to \Re^{(N+1)n_x}$ (defined in \ilarxiv{\cref{app:aff-dist-not}}\ilpub{\cite[\update*{App.~C}]{Arxiv}}), 
has a structure that enforces causality of $\bm{\pi}$. Note $\bm{x}_{:N} \in \Re^{(N+1)n_x}$, $\bm{w}_{:N-1} \in \Re^{Nn_w}$ and $\bm{u}_{:N-1} \in \Re^{Nn_u}$.

The state trajectory then depends on the disturbance as
\begin{equation*}
        \bm{x} = (\bm{B} \bm{F} + \bm{E}) \bm{w} + (\bm{A} x_0 + \bm{B} \bm{f})
         = \hm{\bm{H}} \hm{\bm{w}},
\end{equation*}
with $\bm{A}, \bm{B}, \bm{E}$ defined in \ilarxiv{\cref{app:aff-dist-not}}\ilpub{\cite[\update*{App.~C}]{Arxiv}}, $\hm{\bm{F}} = [\bm{F}, \bm{f}]$
and $\hm{\bm{H}} = [\bm{H}, \bm{{h}}] = [\bm{B} \bm{F} + \bm{E}, \bm{A} x_0 + \bm{B} \bm{f}]$. 
Here the linear part, $\bm{H}$, can be interpreted as the sensitivity of the state to the disturbance, while $\bm{{h}} = ({h}_0, \dots, {h}_N)$ 
is the deterministic part of the state (i.e., the state trajectory when $\bm{w} = 0$).

We first show how the cost \eqref{eq:ocproblem:cost} is implemented, before doing the same for the
risk constraints \eqref{eq:ocproblem:b} and the terminal constraint \eqref{eq:ocproblem:c}.
\ilpub{For conciseness we gloss over classical robust optimization techniques. 
The complete reformulation is given in \cite[\S5]{Arxiv}.}

\subsection{Reformulation of the cost} \label{sec:risk-cost-tractable}
\begin{pub}%
We will assume the risk in \cref{eq:ocproblem:cost} has a product ambiguity,
$\amb^N = \amb_{\beta}(\smomenth) \times \dots \times \amb_{\beta}(\smomenth)$
with $\amb_{\beta}(\smomenth)$ as in \cref{lem:dd-moment}. Letting $z \dfn \trans{\bm{x}} \bm{Q} \bm{x} + \trans{\bm{u}} \bm{R} \bm{u}$, \eqref{eq:ocproblem:cost} is
\begin{align}
        \rme_{\amb^N}[z] &= \rme_{\amb^N} \left[
                \trans{\hm{\bm{w}}} \left(\trans{\hm{\bm{H}}} \bm{Q} \hm{\bm{H}} + \trans{\hm{\bm{F}}} \bm{R} \hm{\bm{F}}\right) \hm{\bm{w}}
        \right]\label{eq:cost-primal}
\end{align}
where $\bm{Q}$ and $\bm{R}$ are defined in \ilarxiv{\cref{app:aff-dist-not}}\ilpub{\cite[\update*{App.~C}]{Arxiv}}.
Note that the argument of $\rho_{\amb^N}$ is quadratic in $\bm{w}$. we
first apply \cref{lem:cr-dual-prod} to dualize \eqref{eq:cost-primal}. We then apply \cref{ex:moment-dual} to the ambiguity factors.
\update*{Let $\Lambda^\rme_i, \uLambda^\rme_i \sgeq 0$, $\tau^\rme \in \Re$, $\lambda_i^{\rme} = (\Lambda^\rme_i, \uLambda^\rme_i)$ and 
$b = (\hat{C} \pm\beta I)$ (as in \cref{ex:moment-dual})}
for all $i \in \N_{0:N-1}$. Then
\begin{alignat}{2}
        \rme_{\amb^N}[z] = &\min_{\Lambda_i^\rho, \uLambda_i^\rho \sgeq 0, \update*{\tau^\rho}} &\quad& \update*{\tau^\rho} + \ssum_{i=0}^{N-1} \notation*{b \cdot \lambda_i^{\rme}} \nonumber \\
        &\sttshort && \update*{\tau^\rho} + \notation*{\ssum_{i=0}^{N-1} \adj{E|_i} \lambda_i^{\rme} \geqas z}, \label{eq:sdp-cost}
\end{alignat}
with $\adj{E|_i} \lambda_i^{\rho} = \update*{\trans{\hm{w}_i} (\Lambda^\rme_i - \uLambda^\rme_i) \hm{w}_i}$ for $i \in \N_{0:N-1}$.
Therefore (cf. $z$ in \cref{eq:cost-primal}), inequality \eqref{eq:sdp-cost} is quadratic in $\bm{w}$. It should hold for all $\bm{w}$ 
in the set of quadratic inequalities $\W^N = \{\bm{w}  \colon \trans{w_i} w_i \leq r^2, \, \forall i \in \N_{0:N-1}\}$. 
This is conservatively approximated by a LMI, with the approximate S-Lemma \cite[Thm.~B.3.1.]{Ben-Tal2009}. 
Its effect here is conservatively quantified as increasing the radius $r$ by a factor $9.19 \sqrt{\ln(N-1)}$, before solving the problem exactly. 
Note that, the \update*{Hessian} of $z$ in \eqref{eq:cost-primal} (cf. $P$ in \cref{ex:moment-dual}) is not linear in $\bm{F}$ and $\bm{f}$. 
So a Schur complement is needed.
\end{pub}%
\begin{arxiv}%
        We will assume the risk in \cref{eq:ocproblem:cost} has a product ambiguity,
        $\amb^N = \amb_{\beta}(\smomenth) \times \dots \times \amb_{\beta}(\smomenth)$
        with $\amb_{\beta}(\smomenth)$ as in \cref{lem:dd-moment}. Letting $z \dfn \trans{\bm{x}} \bm{Q} \bm{x} + \trans{\bm{u}} \bm{R} \bm{u}$, 
        \eqref{eq:ocproblem:cost} is
        \begin{align}
                \rme_{\amb^N}[z] &= \rme_{\amb^N} \left[
                        \trans{\hm{\bm{w}}} \left(\trans{\hm{\bm{H}}} \bm{Q} \hm{\bm{H}} + \trans{\hm{\bm{F}}} \bm{R} \hm{\bm{F}}\right) \hm{\bm{w}}
                \right]\label{eq:rme-cost-primal}
        \end{align}
        where $\bm{Q}$ and $\bm{R}$ are defined in \cref{app:aff-dist-not}.

        Note, by \cref{ex:moment-dual}, that the factors $\amb_i = \amb_{\beta}(\smomenth)$ for $i\in\N_{0:N-1}$ are conic with
        \begin{align*}
                E_i^{\rme}\mu &= (\pm \< R_w \hm{w} \, \trans{\hm{w}} \trans{R_w}, \mu \>), & b^{\rme}_i &= (R_w \smomenth \trans{R_w} \pm \beta I), & 
                \K^{\rme}_i &= \psd{n_w+1} \times \psd{n_w+1}.
        \end{align*}
        Therefore, by \cref{lem:cr-dual-prod}, $\amb^N = \times_{i=0}^{N-1} \amb_i$ is conic with
        \begin{align*}
                E^{\rme}\mu^N &= (E^{\rme}|_0 \mu^N, \dots, E^{\rme}|_{N-1}\mu^N), & b^{\rme} &= (b^{\rme}_0, \dots, b^{\rme}_{N-1})
        \end{align*}        
        and $\K^{\rme} = \K^{\rme}_0 \times \dots \times \K^{\rme}_{N-1}$.
        Let $\Lambda^\rme_i, \uLambda^\rme_i \sgeq 0$, $\tau \in \Re$, $\lambda_i^{\rme} = (\Lambda^\rme_i, \uLambda^\rme_i)$ and $b = (R_w \hat{C} \trans{R_w} \pm\beta I)$
        for all $i \in \N_{0:N-1}$. Then, by \cref{lem:cr-dual-prod}, 
        \begin{subequations} \label{eq:rme-primal-ref}
                \begin{alignat}{2}
                \rme_{\amb^N}[z]=&\min_{\Lambda_i^\rho \sgeq 0, \uLambda_i^\rho \sgeq 0, \tau} &\qquad& \tau 
                        + \ssum_{i=0}^{N-1} b \cdot \lambda_i^{\rme} \label{eq:rme-primal-ref:a} \\
                & \sttshort && \tau + \notation*{\ssum_{i=0}^{N-1} \adj{E^{\rme}|_i} \lambda_i^{\rme} \geqas z}, \label{eq:rme-primal-ref:b}
                \end{alignat}
        \end{subequations}
        with $\adj{E^{\rme}|_i} \lambda_i^{\rho} = \trans{\hm{w}_i} \trans{R_w} (\Lambda^\rme_i - \uLambda^\rme_i) R_w \hm{w}_i$ for $i \in \N_{0:N-1}$.

        Our goal is now to find a tight, conservative LMI reformulation of \eqref{eq:rme-primal-ref:b} as an LMI.
        We use, for any matrix $\Delta \in \sym{n_w+1}$, the partitioning
        \begin{equation*}
                \Delta = \begin{bmatrix}
                        \matpart{\Delta} & \vecpart{\Delta} \\ \trans{\vecpart{\Delta}} & \cstpart{\Delta}
                \end{bmatrix},
        \end{equation*}
        with $\matpart{\Delta} \in \sym{n_w}$, $\vecpart{\Delta} \in \Re^{n_w}$ and $\cstpart{\Delta} \in \Re$. 
        We also introduce $\Delta^\rho_i \dfn \trans{R_w} (\Lambda_i^\rho - \uLambda_i^\rho) R_w$ 
        for all $i \in \N_{0:N-1}$. Then,
        \begin{equation*}
                \sum_{i=0}^{N-1} \adj{E^\rme|_i} \lambda_i^\rho = 
                \trans{\hm{\bm{w}}} \begin{bmatrix}
                        \matpart{\Delta_0^\rho}  &  &  & \vecpart{\Delta_0^\rho} \\
                         & \ddots & & \vdots\\
                         & & \matpart{\Delta_{N-1}^\rho} &\vecpart{\Delta_{N-1}^\rho} \\
                        \trans{\vecpart{\Delta_0^\rho}} & \dots & \trans{\vecpart{\Delta_{N-1}^\rho}} & \tau + \sum_{i=0}^{N-1} \cstpart{\Delta_i^\rho}
                \end{bmatrix} \hm{\bm{w}},
        \end{equation*}
        which is quadratic in $\bm{w}$. 
        Note that the argument of $\rho_{\amb^N}$ is also quadratic in $\bm{w}$.
        Therefore (cf. $z$ in \cref{eq:rme-cost-primal}), \eqref{eq:rme-primal-ref:b} is quadratic in $\bm{w}$. It should hold for all $\bm{w}$ in the
        set of quadratic inequalities $\W^N = \{\bm{w}  \colon \trans{w_i} w_i \leq r^2, \, \forall i \in \N_{0:N-1}\}$. 
        This is conservatively approximated by a LMI, with the approximate S-Lemma \cite[Thm.~B.3.1.]{Ben-Tal2009}. 
        Its effect here is conservatively quantified as increasing the radius $r$ by a factor $9.19 \sqrt{\ln(N-1)}$, before solving the problem exactly. 
        Specifically \eqref{eq:rme-primal-ref:b} holds if there exists a $P^{\rme} \in \psd{N_w + 1}$ and $s^{\rme} \in \Re_+^{N-1}$ such that 
        (remembering $\Delta^\rho_i \dfn \trans{R_w} (\Lambda_i^\rho - \uLambda_i^\rho) R_w$)
        \begin{equation} \label{eq:slemma-cost}
                \begin{bmatrix}
                        \matpart{\Delta_0^\rho} + [s^\rho]_0 I &  &  & \vecpart{\Delta_0^\rho} \\
                         &  \ddots & & \vdots\\
                         & & \matpart{\Delta_{N-1}^\rho} + [s^\rho]_{N-1} I &\vecpart{\Delta_{N-1}^\rho} \\
                        \trans{\vecpart{\Delta_0^\rho}} & \dots & \trans{\vecpart{\Delta_{N-1}^\rho}} & \tau + \ssum_{i=0}^{N-1} \cstpart{\Delta_i^\rho} - [s^\rho]_i r^2
                \end{bmatrix} \sgeq P^\rho 
        \end{equation}
        and
        \begin{equation}
                \begin{bmatrix}
                        P^\rho & \trans{\hm{\bm{H}}} \bm{Q}^{1/2} & \trans{\hm{\bm{F}}} \bm{R}^{1/2} \\
                        \bm{Q}^{1/2} \hm{\bm{H}} & I \\
                        \bm{R}^{1/2} \hm{\bm{F}} & & I
                \end{bmatrix} \sgeq 0. \label{eq:schur-cost}
        \end{equation}
        Here \eqref{eq:schur-cost} implies $\notation*{\trans{\hm{\bm{w}}} P^\rho \hm{\bm{w}} \geq z(\bm{w})}$ for all $\bm{w} \in \Re^{N_w}$ by 
        a Schur complement argument. The cost \eqref{eq:rme-primal-ref:a} meanwhile is given as
        \begin{equation} \label{eq:cost-cost}
                \tau + \ssum_{i=0}^{N-1} \tr[\Lambda^\rho_i(R_w \hat{C} \trans{R_w} + \beta I)] + \tr[\uLambda^\rho_i(R_w \hat{C} \trans{R_w} - \beta I)]. 
        \end{equation}
\end{arxiv}

\subsection{Reformulation of risk constraints} \label{sec:risk-constraint-tractable}
\begin{pub}%
We consider constraints, $\mathcal{r}^t_{\amb}$ with ambiguity $\amb = \amb_{\beta}(\smomenth)$, 
satisfying \cref{asm:risk}. The random variable $\phi(x_t)$ in \eqref{eq:ocproblem:b} equals
\begin{equation} \label{eq:risk-constraint-primal-var}
        \trans{\hm{x}}_t \begin{bmatrix}
                G & g \\ \trans{g} & \gamma
        \end{bmatrix} \hm{x}_t, \quad \text{with } \hm{x}_t = \begin{bmatrix}
                \bm{H}_{t, :t-1} & {h}_t \\ 0 & 1
        \end{bmatrix} \bm{\hm{w}}_{:t-1}
\end{equation}
and is thus quadratic in $\bm{w}_{:t-1}$. By construction, \eqref{eq:ocproblem:b} is of type \eqref{eq:ravar} with core ambiguity $\Pf^{t-1} \times \amb$.
So, by \cref{lem:ravar-conic} followed by \cref{lem:cr-dual-prod}, \eqref{eq:ocproblem:b} is equivalent to:

\update*{There exists a
$\tau_{\bm{c}, t}^{\rcn}, \tau_{t}^{\rcn}\in \Re$,
$\Lambda^{\rcn}_{t}, \uLambda^{\rcn}_{t} \sgeq 0$ and $\lambda^{\rcn}_{t} = (\Lambda^{\rcn}_{t}, \uLambda^{\rcn}_{t})$ such that
\begin{subequations} \label{eq:sdp-cnst-ref}
        \begin{alignat}{1}
        \tau^{\rcn}_{t} + \alpha^{-1} \left(\tau^{\rcn}_{\bm{c},t} + b \cdot \lambda^{\rcn}_t \right) &\leq 0 \label{eq:sdp-cnst-ref:a},\\
        \tau^{\rcn}_{\bm{c},t} + \adj{E^{\rcn}|_{t-1}} \lambda^{\rcn}_{t} &\geqas 0\label{eq:sdp-cnst-ref:b}, \text{ and} \\
        \tau^{\rcn}_{t} + \tau^{\rcn}_{\bm{c},t} + \adj{E^{\rcn}|_{t-1}} \lambda^{\rcn}_{t} &\geqas \phi(x_t),\label{eq:sdp-cnst-ref:c}
        \end{alignat}
\end{subequations}
with $b = (R_w\hat{C}\trans{R_w} \pm\beta I)$ (as in \cref{ex:moment-dual}) and $\adj{E^{\rcn}|_{t-1}} \lambda^{\rcn}_{t} = \trans{\hm{w}_{t-1}} (\Lambda^{\rcn}_{t} - \uLambda^{\rcn}_{t}) \hm{w}_{t-1}$.}
We applied \cref{ex:robust-dual} for $\Pf^{t-1}$ and \cref{ex:moment-dual} for $\amb$. Note that \eqref{eq:sdp-cnst-ref:a} is a linear constraint. 
Since $\adj{E^{\rcn}|_{t-1}} \lambda^{\rcn}_{t}$ is quadratic,
\cref{eq:sdp-cnst-ref:b,eq:sdp-cnst-ref:c} are of similar structure to the constraint of \eqref{eq:sdp-cost}. 
That is, a quadratic inequality that should hold for all $\bm{w}_{:t-1} \in \W^{t-1}$. 
\update*{Hence, we} can once again apply the approximate S-Lemma followed by a Schur complement, resulting in a LMI.

It is clear that \eqref{eq:ocproblem:c} is equivalent to
\begin{equation}
        \trans{\bm{w}} \trans{\begin{bmatrix}
                \bm{H}_{N, :} & {h}_N \\ 0 & 1
        \end{bmatrix}} \begin{bmatrix}
                G_f & g_f \\ \trans{g}_f & \gamma_f
        \end{bmatrix}  \begin{bmatrix}
                \bm{H}_{N, :} & {h}_N \\ 0 & 1
        \end{bmatrix} \bm{w} \leq 0, \label{eq:sdp-terminal}
\end{equation}
for all $\bm{w} \in \W^N$. We then apply the approximate S-Lemma, followed by a Schur complement. 

Following these steps, we can add the LMI reformulations of \cref{eq:sdp-cnst-ref:a}--\eqref{eq:sdp-terminal} to 
the SDP formulation of \eqref{eq:sdp-cost}. The resulting problem conservatively approximates \eqref{eq:ocproblem}. 
See \ilarxiv{\cref{app:sdp}}\ilpub{\cite[\S5]{Arxiv}} for the full expressions and resulting SDP.
\end{pub}%
\begin{arxiv}%
We consider constraints, $\mathcal{r}^t_{\amb}$ with ambiguity $\amb = \amb_{\beta}(\smomenth)$, 
satisfying \cref{asm:risk}. The random variable $\phi(x_t)$ in \eqref{eq:ocproblem:b} equals
\begin{equation} \label{eq:risk-constraint-primal-var}
        \trans{\hm{x}}_t \begin{bmatrix}
                G & g \\ \trans{g} & \gamma
        \end{bmatrix} \hm{x}_t, \quad \text{with } \hm{x}_t = \begin{bmatrix}
                \bm{H}_{t, :t-1} & {h}_t \\ 0 & 1
        \end{bmatrix} \bm{\hm{w}}_{:t-1}
\end{equation}
and is thus quadratic in $\bm{w}_{:t-1}$. By construction, \eqref{eq:ocproblem:b} is of type \eqref{eq:ravar} with core ambiguity $\Pf^{t-1} \times \amb$.
Both ambiguity factors are conic by \cref{ex:robust-dual} and \cref{ex:moment-dual}. We now use $\mu^t = \mu^{t-1} \times \mu_{t-1}$. Then, 
similarly to before, by \cref{lem:cr-dual-prod}, $\Pf^{t-1} \times \amb$ is conic for
\begin{equation*}
        E^{\rcn}\mu^t =E^{\rcn}|_{t-1} \mu^t  = (\<\pm (R_w \hm{w}_{t-1} \trans{\hm{w}_{t-1}} \trans{R}_w ), \mu^t\>),
\end{equation*}
$b^{\rcn} = (R_w \smomenth \trans{R_w} \pm \beta I)$ and $\K^{\rcn} = \psd{n_w+1} \times \psd{n_w+1}$. 

Using the duality result of \cref{lem:ravar-conic} shows that \eqref{eq:ocproblem:b} holds if there exists a
$\tau_t^{\rcn}, \tau_{\bm{c}, t}^{\rcn} \in \Re$,
$\Lambda^{\rcn}_{t}, \uLambda^{\rcn}_{t} \sgeq 0$ and $\lambda^{\rcn}_{t} = (\Lambda^{\rcn}_{t}, \uLambda^{\rcn}_{t})$ such that
\begin{subequations} \label{eq:sdp-cnst-ref}
        \begin{alignat}{1}
        \tau^{\rcn}_t + \alpha^{-1} \left(\tau_{\bm{c}, t}^{\rcn}  + b^{\rcn}  \cdot \lambda^{\rcn}_t \right) &\leq 0 \label{eq:sdp-cnst-ref:a},\\
        \tau^{\rcn}_{\bm{c}, t} + \adj{E^{\rcn}|_{t-1}} \lambda^{\rcn}_{t} &\geqas 0\label{eq:sdp-cnst-ref:b}, \text{ and} \\
        \tau^{\rcn}_{t} + \tau^{\rcn}_{\bm{c}, t} + \adj{E^{\rcn}|_{t-1}} \lambda^{\rcn}_{t} &\geqas \phi(x_t).\label{eq:sdp-cnst-ref:c}
        \end{alignat}
\end{subequations}
We applied \cref{ex:robust-dual} for $\Pf^{t-1}$ and \cref{ex:moment-dual} for $\amb$. Note that \eqref{eq:sdp-cnst-ref:a} is a linear constraint. 
Since $\adj{E^{\rcn}|_{t-1}} \lambda^{\rcn}_{t} = \trans{\hm{w}_{t-1}} \trans{R_w} (\Lambda^{\rcn}_{t} - \uLambda^{\rcn}_{t}) R_w \hm{w}_{t-1}$,
\cref{eq:sdp-cnst-ref:b,eq:sdp-cnst-ref:c} are of similar structure to \eqref{eq:rme-primal-ref:b}. 
That is, a quadratic inequality that should hold for all $\bm{w}_{:t-1} \in \W^{t-1}$. 
Hence we can once again apply the approximate S-Lemma followed by a Schur complement, resulting in a LMI.

Specifically, let $\Delta^\rcn_t \dfn \trans{R}_w (\Lambda_t^\rcn - \uLambda_t^\rcn) R_w$. Then,
\begin{equation*}
        \adj{E^{\rcn}|_{t-1}} \lambda_t^{\rcn} = \trans{\hm{\bm{w}}_{:t-1}}\begin{bmatrix}
                & \\ && \matpart{\Delta_t^\rcn} & \vecpart{\Delta_t^\rcn}  \\ 
                && \trans{\vecpart{\Delta_t^\rcn}} &  \cstpart{\Delta_t^\rcn}
        \end{bmatrix}\hm{\bm{w}}_{:t-1}.
\end{equation*}

Following a similar procedure to before, we apply the approximate S-Lemma for both \eqref{eq:sdp-cnst-ref:b} and \eqref{eq:sdp-cnst-ref:c}.
Specifically, \eqref{eq:sdp-cnst-ref:b} holds if there exists some $s^{\rcn}_{1,t} \in \Re_+^{N-1}$ such that
\begin{align}
        \begin{bmatrix}
                [s^{\rcn}_{1,t}]_0 I \\ & \ddots \\ && [s^{\rcn}_{1,t}]_{t-1} I \\ &&& 
                \matpart{\Delta_t^\rcn}  & \vecpart{\Delta_t^\rcn} \\ &&& \trans{\vecpart{\Delta_t^\rcn}} &  \cstpart{\Delta_t^\rcn} + \tau_{\bm{c}, t}^{\rcn} - \ssum_{i=0}^{t-1}[s^{\rcn}_{1,t}]_{i} r^2
        \end{bmatrix} &\sgeq 0,  \label{eq:sdp-cnst-quad-a-ref} 
\end{align}
Meanwhile, \eqref{eq:sdp-cnst-ref:c} holds if there exists some $P^\rcn_t \in \psd{tn_w + 1}$ and $s^{\rcn}_{2,t} \in \Re_+^{N-1}$ such that
\begin{equation}
        \begin{bmatrix}
                [s^{\rcn}_{2,t}]_0 I \\ & \ddots \\ && [s^{\rcn}_{2,t}]_{t-1} I \\ &&& 
                \matpart{\Delta_t^\rcn} & \vecpart{\Delta_t^\rcn} \\ &&& \trans{\vecpart{\Delta_t^\rcn}} &  \cstpart{\Delta_t^\rcn} +  \tau_{\bm{c}, t}^{\rcn} + \tau_{t}^{\rcn}  - \ssum_{i=0}^{t-1}[s^{\rcn}_{2,t}]_{i} r^2
        \end{bmatrix} \sgeq P^{\rcn}_t, \label{eq:sdp-cnst-quad-b-ref}                
\end{equation}
where, analogously to $P^{\rme}$, the matrix $P^{\rcn}_t$ should satisfy
\begin{align}
        P_t^{\rcn} &\sgeq \trans{\begin{bmatrix}
                \bm{H}_{t, :t-1} & {h}_t \\ 0 & 1
        \end{bmatrix}} \begin{bmatrix}
                G & g \\ \trans{g} & \gamma
        \end{bmatrix}\begin{bmatrix}
                \bm{H}_{t, :t-1} & {h}_t \\ 0 & 1
        \end{bmatrix}\nonumber\\
        &= \trans{\begin{bmatrix}
                \bm{H}_{t, :t-1} & {h}_t
        \end{bmatrix}} G \begin{bmatrix}
                \bm{H}_{t, :t-1} & {h}_t
        \end{bmatrix} +  \begin{bmatrix}
                0 & \trans{\bm{H}_{t, :t-1}} g \\ \trans{g} \bm{H}_{t, :t-1} & 2 \trans{{h}_t} g + \gamma
        \end{bmatrix}. \label{eq:pre-schur-rcn}
\end{align}
We then apply a Schur complement to find an equivalent LMI for \eqref{eq:pre-schur-rcn}:
\begin{equation}  \label{eq:sdp-cnst-quad-b-aux}
        \begin{bmatrix}
                P_t^{\rcn} - \begin{bmatrix}
                        0 & \trans{\bm{H}_{t, :t-1}} g \\ \trans{g} \bm{H}_{t, :t-1} & 2 \trans{{h}_t} g + \gamma
                \end{bmatrix} & 
                        \trans{\begin{bmatrix}
                                \bm{H}_{t, :t-1} & {h}_t
                        \end{bmatrix}} G^{1/2} \\
                G^{1/2} \begin{bmatrix}
                        \bm{H}_{t, :t-1} & {h}_t 
                \end{bmatrix} & I
        \end{bmatrix}\sgeq 0.
\end{equation}

Constraint \eqref{eq:sdp-cnst-ref:a} is equivalent to
\begin{equation} \label{eq:sdp-cnst-lin-ref}
        \tau_{t}^{\rcn} + \alpha^{-1} \left(\tau_{\bm{c}, t}^{\rcn} + \tr[\Lambda_t^{\rcn} (R_w \hat{C} \trans{R_w} + \beta I)] + \tr[\uLambda_t^{\rcn} (R_w \hat{C} \trans{R}_w - \beta I)] \right) \leq 0.
\end{equation}

We conclude by reformulating the terminal constraint \eqref{eq:ocproblem:c}, which is equivalent to
\begin{equation}
        \trans{\bm{w}} \trans{\begin{bmatrix}
                \bm{H}_{N, :} & {h}_N \\ 0 & 1
        \end{bmatrix}} \begin{bmatrix}
                G_f & g_f \\ \trans{g}_f & \gamma_f
        \end{bmatrix}  \begin{bmatrix}
                \bm{H}_{N, :} & {h}_N \\ 0 & 1
        \end{bmatrix} \bm{w} \leq 0, \label{eq:sdp-terminal}
\end{equation}
for all $\bm{w} \in \W^N$. We apply the approximate S-Lemma, followed by a Schur complement. 
Specifically \eqref{eq:sdp-terminal} holds, if there exists some $s_f \in \Re^{N-1}_+$ such that
\begin{equation} \label{eq:sdp-terminal-ref}
        \begin{bmatrix}
                \begin{bmatrix}
                        \diag(s_f) \otimes I_{n_w} & -\trans{\bm{H}_{N, :}} g_f \\ -\trans{g}_f \bm{H}_{N,:} & -\ssum_{i=0}^{N-1} [s_f]_i r^2 -2 \trans{{h}_N} g_f - \gamma_f
                \end{bmatrix} & 
                        \trans{\begin{bmatrix}
                                \bm{H}_{N, :} & {h}_N
                        \end{bmatrix}} G^{1/2}_f \\
                G^{1/2}_f \begin{bmatrix}
                        \bm{H}_{N, :} & {h}_N
                \end{bmatrix} & I
        \end{bmatrix}\sgeq 0.
\end{equation}

To summarize, we minimize \eqref{eq:cost-cost} subject to \eqref{eq:slemma-cost} and \eqref{eq:schur-cost} (replacing the cumulative stage cost \eqref{eq:ocproblem:cost}). 
Then add constraints \eqref{eq:sdp-cnst-quad-a-ref}, \eqref{eq:sdp-cnst-quad-b-ref}, \eqref{eq:sdp-cnst-quad-b-aux}, \eqref{eq:sdp-cnst-lin-ref} for each $t \in \N_{1:N-1}$ 
(replacing the chance constraints \eqref{eq:ocproblem:b}).
The terminal constraint \eqref{eq:ocproblem:c} is then replaced by \eqref{eq:sdp-terminal-ref}. The resulting minimization problem is a SDP. 
\end{arxiv}

\begin{pub}
        \finalpage{}   
\end{pub}

\subsection{Numerical Results}
We consider two experiments: one where data is gathered offline and one where data is gathered online.
In both settings the controller acts more optimally when more data is available without affecting recursive feasibility. 
\paragraph{Offline learning}
Consider the linear system
\begin{equation*}
        x_{t+1} = \mat{
                0.9 & 0.2 \\ 0 & 0.8
        } x + \mat{0.1 \\ 0.05} u_t + \mat{0.5 & 0 \\ 0 & 0.1} w_t,
\end{equation*}
with $w_t$ distributed as a truncated Gaussian with covariance $\Sigma = 0.01^2 I$ and $\nrm{w_t}_2 \leq 0.15$ and $N = 8$. For the cost, we take 
$Q = \diag(1, 10), R = 10$ and $Q_f$ and $K_f$ and the solution to the discrete algebraic Riccati equation and the corresponding optimal controller respectively.
For the state constraints we have $\alpha = 0.2$ and $\phi(x) = [x]_1 - 1.5 \leq 0$,
where $[x]_1$ denotes the first element of $x$. For the terminal constraint we have $\psi(x) = \trans{x} Q_f x - 9.524 \leq 0$, which 
is RPI for $\pi_f(x) = K_f x$ and $x_0 = (1.75, 2)$.

We investigate sample counts $M$ between $10$ and $10^6$. For each sample count we gather an offline data set and 
compute $\amb$ from \cref{lem:dd-moment} for $\delta = 0.05$. Then we propagate the dynamics, with the receding horizon controller \eqref{eq:ocproblem} for $T = 15$ time steps, cumulating the stage cost 
for each step. This experiment is repeated $25$ times. The range of resulting costs is shown in \cref{fig:regret}. For $M = 10$ and $M = 10^6$,
closed-loop state trajectories are depicted.
The cost initially remains approximately constant, since the moment constraint is inactive due to insufficient samples.
Afterwards, the controller gains confidence, moving closer to the state constraint when more data is available, resulting in a lower cost.


\begin{figure}
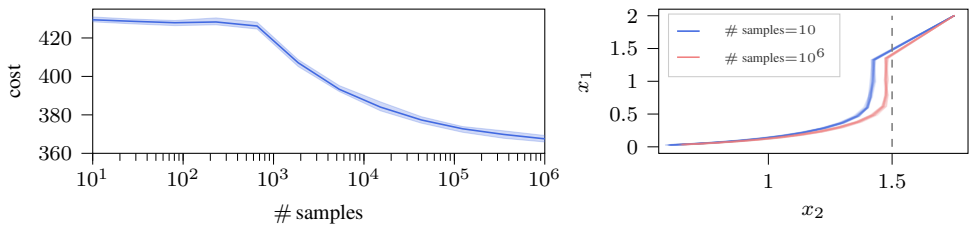

        \centering

\pgfplotsset{%
legend image code/.code={
    \draw[mark repeat=2,mark phase=2]
    plot coordinates {
    (0cm,0cm)
    (0.15cm,0cm)        
    (0.3cm,0cm)         
    };%
    }%
}

\begin{tikzpicture}
    \tikzstyle{every node}=[font=\footnotesize]
    \definecolor{color0}{rgb}{0.254901960784314,0.411764705882353,0.882352941176471}
    \definecolor{color1}{rgb}{0.941176470588235,0.501960784313725,0.501960784313725}

    \begin{axis}[
            name=plot_cost,
            ticklabel style={font=\footnotesize},
            width=0.6\linewidth,
            height=3.5cm,
            log basis x={10},
            log basis y={10},
            tick align=outside,
            tick pos=left,
            x grid style={white!69.0196078431373!black},
            xlabel={$\# \, \text{samples}$},
            xmin=10, xmax=1000000,
            xmode=log,
            xtick style={color=black},
            y grid style={white!69.0196078431373!black},
            xticklabel shift=-2pt,
            ylabel={cost},
            ymin=360, ymax=435,
            ylabel near ticks,
            xlabel near ticks,
            ytick style={color=black}
        ]
        \input{assets/hoeffding-offline-cost.tex}
    \end{axis}

    \begin{axis}[
            name=plot_states,
            at=(plot_cost.right of south east), anchor=left of south west,
            ticklabel style={font=\footnotesize},
            width=0.45\linewidth,
            height=3.5cm,
            legend cell align={left},
            legend style={
                fill opacity=0.8,
                draw opacity=1,
                text opacity=1,
                at={(0.03,0.97)},
                anchor=north west,
                draw=white!80!black
            },
            tick align=outside,
            ylabel near ticks,
            xlabel near ticks,
            tick pos=left,
            x grid style={white!69.0196078431373!black},
            xlabel={\(\displaystyle x_2\)},
            xmin=0.556916745705575, xmax=1.80681348829973,
            xtick style={color=black},
            y grid style={white!69.0196078431373!black},
            ylabel={\(\displaystyle x_1\)},
            ymin=-0.1, ymax=2.1,
            ytick style={color=black}
        ]
        
        \addplot [semithick, color0, opacity=1]
        table {%
        1.75 2
        1.75 2
        };
        \addlegendentry{$\scriptscriptstyle \# \, \text{samples} = 10$}
        
        \addplot [semithick, color1, opacity=1]
        table {%
        1.75 2
        1.75 2
        };
        \addlegendentry{$\scriptscriptstyle \# \, \text{samples} = 10^6$}
        \input{assets/hoeffding-offline-states.tex}
    \end{axis}
    
    \end{tikzpicture}
    \vspace{-0.75em}
        \caption{(left) Range of closed-loop costs when using the matrix-hoeffding bound to construct the ambiguity set 
        for $T = 15$; (right) state trajectories for $M = 10, 10^6$ offline samples.\vspace{-0.5em}} \label{fig:regret}
\end{figure}


\paragraph{Online learning}
We illustrate how our framework can be used to construct controllers that learn online, guaranteeing recursive feasibility. 
Consider the scalar linear system
\begin{equation*}
        x_{t+1} = x_t + u_t + w_t,
\end{equation*}
with $w_t$ distributed as a truncated Gaussian with variance $\sigma^2 = 0.05^2$, expectation $\E[w_t] = 0.05$ 
and $|w_t| \leq 1.0$. Let $Q = 1$, $R = 0.1$, $N=5$, $\alpha=0.2$ and
$\phi(x) = 1 - x \leq 0$. We use $\pi_f(x) = -0.9 x + 2.375$, $\psi(x) = (x - 2.375)^2 - 1.375^2 \leq 0$
and $Q_f$ the solution to the discrete algebraic Riccati equation. To implement learning, the ambiguity update 
scheme described in \ilarxiv{\cref{app:ambig-update}}\ilpub{\cite[App.~D]{Arxiv}} is used (which constructs ambiguity from data based on \cref{lem:dd-moment} for $\delta=0.05$ and rejects an ambiguity set if it 
is not a subset of the previously accepted ambiguity). 

We consider three cases: \emph{(i)} robust, with $\amb = \Pf$ and no learning; \emph{(ii)} online, starting from $M = 10$ samples
and then updating the ambiguity online while running the controller; and \emph{(iii)} offline, where $M = 10\,000$ initial samples and no updates. 
\cref{fig:online} depicts the state trajectory, which approaches the constraint for larger $M$.

\begin{figure}
        \centering
        \input{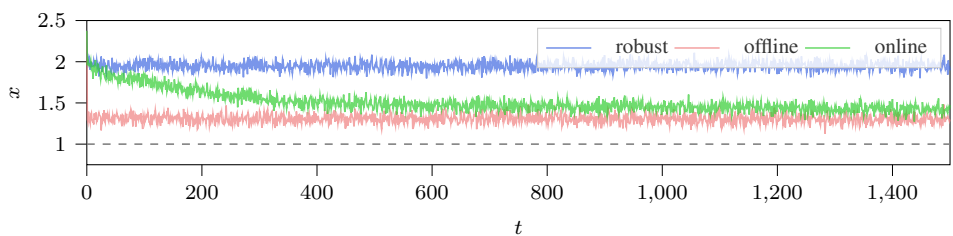}\vspace{-0.75em}
        \caption{\shorten*{State trajectories for robust; online-; and offline learning.}\vspace{-0.75em}} \label{fig:online}
\end{figure}

\printbibliography


\begin{arxiv}
\begin{appendix}
        \subsection{Proof of \cref{lem:dd-moment}} \label{app:dd-moment-proof}
        We begin by showing an auxiliary lemma, closely following the proof of \cite[Lemma.~2.1]{Rusten1992}.
        {\renewcommand{\thesection}{A}\begin{lemma} \label{lem:saddlepoint-matrix}
                Consider the following matrix, with $X \in \sym{d}$ and $x \in \Re^d$,
                \begin{equation*}
                        \hm{X} \dfn \begin{bmatrix} X & x \\ \trans{x} & 0 \end{bmatrix}.
                \end{equation*}
                Assume $\nrm{X}_2 \leq r^2_X$ and $\nrm{x}_2 \leq r_x$.
                Then $\nrm{X}_2 \leq (r_X^2 + \sqrt{r_X^4 + 4 r_x^2})/2$. 
        \end{lemma}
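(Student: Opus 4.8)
The plan is to use that $\hm{X}$ is symmetric (since $X \in \sym{d}$ and the $(2,2)$ block is zero), so its spectral norm coincides with the maximal absolute value of its Rayleigh quotient: $\nrm{\hm{X}}_2 = \max\{\, |\trans{[u;t]} \hm{X} [u;t]| : u \in \Re^d,\ t \in \Re,\ \nrm{u}_2^2 + t^2 = 1 \,\}$. Expanding the block form of $\hm{X}$ yields the identity $\trans{[u;t]} \hm{X} [u;t] = \trans{u} X u + 2 t\, \trans{x} u$, which splits the quadratic form into a ``diagonal'' and an ``off-diagonal'' contribution.

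Next I would bound the two contributions separately. From $\nrm{X}_2 \leq r_X^2$ one gets $|\trans{u} X u| \leq r_X^2 \nrm{u}_2^2$, and from Cauchy--Schwarz together with $\nrm{x}_2 \leq r_x$ one gets $|2 t\, \trans{x} u| \leq 2 r_x\, |t|\, \nrm{u}_2$. Hence $|\trans{[u;t]} \hm{X} [u;t]| \leq r_X^2 \nrm{u}_2^2 + 2 r_x\, |t|\, \nrm{u}_2$ for every unit vector $[u;t]$, so it only remains to maximize the right-hand side subject to $\nrm{u}_2^2 + t^2 = 1$.

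Finally I would carry out this one-dimensional maximization by setting $\nrm{u}_2 = \cos\theta$ and $|t| = \sin\theta$ with $\theta \in [0,\pi/2]$. Using $\cos^2\theta = (1+\cos 2\theta)/2$ and $2\cos\theta\sin\theta = \sin 2\theta$, the bound becomes $r_X^2/2 + (r_X^2/2)\cos 2\theta + r_x \sin 2\theta$, and since $\max_\phi (A\cos\phi + B\sin\phi) = \sqrt{A^2+B^2}$ its maximum equals $r_X^2/2 + \sqrt{(r_X^2/2)^2 + r_x^2} = (r_X^2 + \sqrt{r_X^4 + 4 r_x^2})/2$, which is exactly the claimed constant.

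There is no serious obstacle here; the whole argument is a short, self-contained computation. The only points requiring a little care are checking that the Rayleigh-quotient characterization of $\nrm{\cdot}_2$ applies (which needs $\hm{X}$ symmetric) and performing the trigonometric optimization so that the constant matches precisely. An alternative derivation from the eigen-equations $X u + t x = \lambda u$ and $\trans{x} u = \lambda t$ is possible, but it forces case distinctions (for instance when $\lambda$ is simultaneously an eigenvalue of $X$, or when $t=0$), so the variational route above is the cleaner one.
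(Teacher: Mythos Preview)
Your proof is correct and takes a different route from the paper's. The paper works directly with the eigenvalue equations $Xv + \gamma x = \lambda v$, $\trans{x}v = \lambda\gamma$ for an eigenvector $(v,\gamma)$ with $\lambda \neq 0$, eliminates $\gamma$ to obtain the scalar identity $\lambda\,\trans{v}Xv + (\trans{x}v)^2 - \lambda^2\nrm{v}_2^2 = 0$, and then bounds each term to force $\lambda$ between the roots of a quadratic; this requires a sign split on $\lambda$. Your variational argument via the Rayleigh quotient avoids that case distinction entirely and reduces the problem to a one-parameter trigonometric maximization, which is slightly cleaner. Amusingly, you anticipated the paper's approach in your closing remark and correctly identified the case analysis it incurs. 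One small point worth making explicit for completeness: the maximum of $(r_X^2/2)\cos\phi + r_x\sin\phi$ is attained at $\phi = \arctan(2r_x/r_X^2) \in [0,\pi/2]$, hence within the range $2\theta \in [0,\pi]$, so the bound $\sqrt{(r_X^2/2)^2 + r_x^2}$ is indeed achieved.
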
}
        \begin{proof}
                Assume $(v, \gamma)$ is an eigenvector of $\hm{X}$ with associated eigenvalue $\lambda$.
                We assume without loss of generality that $\lambda \neq 0$. Then
                \begin{align}
                        X v + \gamma x &= \lambda v, \label{eq:saddlepoint-matrix-1} \\
                        \trans{x} v &= \lambda \gamma. \label{eq:saddlepoint-matrix-2}
                \end{align}
                Left multiplying \eqref{eq:saddlepoint-matrix-1} with $\trans{v}$ gives
                \begin{align}
                        &\trans{v} X v + \gamma \trans{x} v = \lambda \trans{v}v \nonumber \\
                        \Leftrightarrow\quad& p(\lambda) \dfn \lambda \trans{v} X v + \trans{v} (x \trans{x}) v - \lambda^2 \trans{v} v = 0,\label{eq:saddlepoint-matrix-3}
                \end{align}
                where equivalence holds due to \eqref{eq:saddlepoint-matrix-2} and the assumption $\lambda \neq 0$.

                \paragraph{(a) Assume $\lambda > 0$} in this case, from \eqref{eq:saddlepoint-matrix-3}, $\nrm{X}_2 \leq r_X^2$ and $\nrm{x}_2 \leq r_x$ we have 
                \begin{equation*}
                        (-r_X^2 \lambda - \lambda^2) \nrm{v}_2^2 \leq p(\lambda) \leq (r_X^2 \lambda + r_x^2 - \lambda^2) \nrm{v}_2^2.
                \end{equation*}
                Knowing that $\nrm{v}_2^2 \geq 0$ and $p(\lambda) = 0$, we should have $(-r_X^2 \lambda - \lambda^2) \leq 0$. This holds trivially since $\lambda \geq 0$.
                We should also have $(r^2_X \lambda + r^2_x - \lambda^2) \geq 0$, which results in a condition on $\lambda$ by 
                solving the quadratic inequality and taking the positive solution (since we assumed $\lambda \geq 0$). The resulting bound is
                $\lambda \leq (r_X^2 + \sqrt{r^4_X + 4 r_x^2})/2$.

                \paragraph{(b) Assume $\lambda < 0$} in this case we have 
                \begin{equation*}
                        (r_X^2 \lambda - \lambda^2) \nrm{v}_2^2 \leq p(\lambda) \leq (-r_X^2 \lambda + r^2_x - \lambda^2) \nrm{v}_2^2.
                \end{equation*}
                Following similar arguments as above, we see that the first inequality holds trivially, while the second holds iff $\lambda \geq -(r^2_X + \sqrt{r^4_X + 4r_x^2})/2$. 

                We have thus shown $-(r_X^2 + \sqrt{r_X^4 + 4r_x^2})/2 \leq \lambda \leq (r_X^2 + \sqrt{r_X^4 + 4 r_x^2})/2$, thereby proving the claimed result. 
        \end{proof}

        Using this lemma we can now proceed with the main proof.

        Let $\hm{X_i} \dfn R_w (\hm{w}_i \trans{\hm{w}_i} - \E_{\mu_\star}[\hm{w} \,\trans{\hm{w}}])  \trans{R_w}$ 
        and its eigenvalue decomposition $\hm{X_i} = V_i D_i \trans{V_i}$. 
        Clearly this matrix is structured as $\hm{X}$ in \cref{lem:saddlepoint-matrix}, where $X = w_i \trans{w}_i - \E_{\mu_\star}[w \trans{w}]$
        and $x = c r (w_i - \E_{\mu_\star}[w])$. We know that a.s., $-r^2 I \sleq -\E_{\mu_\star}[w \trans{w}] \sleq X \sleq w_i \trans{w_i} \sleq r^2 I$,
        hence $\nrm{X}_2 \leq r^2$. From a triangle inequality we also know that $\nrm{x}_2 \leq 2 c r^2$ a.s. Applying \cref{lem:saddlepoint-matrix} 
        then shows $\nrm{\hm{X}_i}_2 \leq r^2 (1 + \sqrt{1 + 16 c})/2 \nfd \bar{\lambda}$ a.s. Therefore the eigenvalues are bounded as $|d_i| \leq \bar{\lambda}$. 
                
        As such, we can produce a matrix sub-Gaussian bound:
        \begin{align*}
                \E[\exp(\theta \hm{X}_i)] &= \E[V_i \exp(\theta D_i) \trans{V_i}] \\
                                     &\labelrel\sleq{step:convexity} \E\left[\tfrac{(\bar{\lambda} I + \hm{X_i})}{2\bar{\lambda}} \exp(\theta \bar{\lambda}) + \tfrac{(\bar{\lambda} I - \hm{X_i})}{2\bar{\lambda}}\exp(-\theta \bar{\lambda}) \right]\\
                                     &\labelrel={step:zero-mean} \tfrac{\exp(\theta \bar{\lambda}) + \exp(-\theta \bar{\lambda})}{2}I 
                                     = \cosh(\theta \bar{\lambda}) I \sleq \exp(\theta^2 \bar{\lambda}^2/2) I,
        \end{align*}
        where \ref{step:convexity} follows from convexity and $V_i \trans{V_i} = I$, and \ref{step:zero-mean} from $\E[\hm{X_i}] = 0$.
        
        From independence and \cite[Lem.~3.5.1]{Tropp2015} we have, $\E[\tr[\exp(\ssum_{i=0}^{M-1}  \hm{X_i})]] \leq d \exp(M \theta^2 \bar{\lambda}^2/2)$,
        with $d \dfn n_w + 1$. 
        This allows us to apply \cite[Prop.~3.2.1]{Tropp2015}:
        \begin{align*}
                \prob[\lambda_{\mathrm{max}}(\ssum_{i=0}^{M-1}  \hm{X_i}) \geq \beta] &\leq \inf_{\theta > 0} d \exp(-\theta \beta + M \theta^2 \bar{\lambda}^2/2) 
                = d\exp(-\beta^2/(2M \bar{\lambda}^2)).
        \end{align*}
        Similarly bounding $\lambda_{\mathrm{min}}$ and applying a union bound gives
        \begin{equation*}
                \prob[\nrm{\ssum_{i=0}^{M-1} \hm{X_i}} \geq \beta] \leq 2d\exp(-\beta^2/(2M \bar{\lambda}^2)).
        \end{equation*}
        This is equivalent to 
        \begin{equation*}
                \prob[\nrm{\ssum_{i=0}^{M-1} \hm{X_i}/M} \geq \beta] \leq 2d\exp(-M\beta^2/(2 \bar{\lambda}^2)),
        \end{equation*}
        by a change of variable for $\beta$. Setting the right-hand-side equal to $\delta$ and solving for $\beta$ gives the value in \cref{lem:dd-moment},
        showing $\prob[\nrm{\smomenth - \smoment}_2 \leq \beta] \geq 1-\delta$, implying \eqref{eq:amb_conf}.\qed{}

        \subsection{Continuity of iterated integration} \label{app:continuity}
        We argue why $h(\bm{w}_{:t-2})$ in the proof of \cref{thm:rf} is a measurable, continuous random variable. It is given by
        \begin{equation*}
                h(\bm{w}_{:t-2}) = \int_{\W} z_{\tau}(\bm{w}_{:t-2}, w_{t-1}) \di \mu(w_{t-1}),
        \end{equation*}
        for continuous $z_{\tau} \in \Z^t$. Measurability follows by Tonelli's theorem \cite[Thm.~5.2.1]{Athreya2006}, compactness of $\W$ and continuity of $z_{\tau}$ (implying $\sigma$-finiteness of the measure spaces and integrability).
        Meanwhile, continuity follows by an epsilon-delta argument \cite[\S9.3]{Bruckner2007}.
        Specifically, Jensen's inequality gives
        \begin{align}
                |h(\bm{w}_{:t-2})  - h(\bm{w}_{:t-2}')| &\leq \int_{\W} |z_{\tau}(\bm{w}_{:t-2}, w_{t-1}) - z_{\tau}(\bm{w}_{:t-2}', w_{t-1})| \di\mu_{t-1}(w_{t-1}), \nonumber\\
                                                        &\leq \max_{w_{t-1}\in\W} \left\{|z_{\tau}(\bm{w}_{:t-2}, w_{t-1}) - z_{\tau}(\bm{w}_{:t-2}', w_{t-1})| \right\}\label{eq:hcont:a}
        \end{align}
        for any $\bm{w}_{:t-2}, \bm{w}_{:t-2}' \in \W^{t-1}$. The second inequality holds since $|z_{\tau}(\bm{w}_{:t-2}, w_{t-1}) - z_{\tau}(\bm{w}_{:t-2}', w_{t-1})|$
        is continuous in $w_{t-1}$ by continuity of $z_{\tau}$, which enables the application of \cref{ex:robust-dual} 
        to replace integration over $\mu_{t-1} \in \Pf(\W)$ by a maximization over $w_{t-1} \in \W$. 

        From continuity of $z_{\tau}$ we have
        \begin{equation*}
                \forall \varepsilon \geq 0, \exists \delta \geq 0 \text{ s.t. } \nrm{\bm{w} - \bm{w}'} \leq \delta \Rightarrow |z(\bm{w}) - z(\bm{w}')| \leq \epsilon. 
        \end{equation*}
        So picking $\bm{w} = (\bm{w}_{:t-2}, w_{t-1})$ and $\bm{w}' = (\bm{w}_{:t-2}', w_{t-1})$ and plugging into \eqref{eq:hcont:a} gives
        \begin{equation} \label{eq:hcont:b}
                \nrm{\bm{w}_{:t-2} - \bm{w}_{:t-2}'} \leq \delta \Rightarrow |h(\bm{w}_{:t-2})  - h(\bm{w}_{:t-2}')| \leq \varepsilon. 
        \end{equation}
        Hence, for any $\varepsilon \geq 0$ we have a $\delta\geq 0$ s.t. \eqref{eq:hcont:b} holds, implying continuity of $h$. \qed{}

        \subsection{Affine disturbance feedback notation} \label{app:aff-dist-not}
        Let $N_x \dfn (N+1)n_x$, $N_w \dfn Nn_w$ and $N_u \dfn Nn_u$. Following \cite{Goulart2005}, we consider the matrices,
        $\bm{A} \in \Re^{N_x \times n_x}$, $\bm{B} \in \Re^{N_x \times N_u}$,
        $\bm{E} \in \Re^{N_x \times N_w}$ and $\bm{F} \in \Re^{N_u \times N_w}$
        such that
        {\allowdisplaybreaks{}%
        \begin{align} \label{eq:causality constraints}
                \bm{A} &\dfn \smallmat{
                        I_{n_x} \\ A \\ A^2 \\ \vdots \\ A^N
                }, & 
                \bm{B} &\dfn \smallmat{
                        0 & 0 & \dots & 0\\
                        B & 0 & \dots & 0\\
                        AB & B & \dots & 0 \\
                        \vdots & \vdots & \ddots & \vdots \\
                        A^{N-1}B & A^{N-2}B & \dots & B
                }, \\
                \bm{E} &\dfn \smallmat{
                        0 & 0 & \dots & 0 \\
                        E & 0 & \dots & 0 \\
                        AE & E & \dots & 0  \\
                        \vdots & \vdots & \ddots &  \vdots \\
                        A^{N-1}E & A^{N-2}E & \dots & E
                }, &
                \bm{F} &\dfn \smallmat{
                        0 & 0 & \dots & 0  \\
                        F_{1,0} & 0 & \dots & 0 \\
                        \vdots & \vdots & \ddots & \vdots \\
                        F_{N-1, 0} & \dots & F_{N-1, N-2} & 0
                },
        \end{align}%
        }%
        $\bm{Q} = \diag(Q, \dots, Q, Q_f)$ and
        $\bm{R} = \diag(R, \dots, R)$. 

        \subsection{Updating the ambiguity} \label{app:ambig-update}
        In this section we examine how \cref{asm:amb} can be guaranteed in practice. More specifically, assume we have 
        $M$ offline samples used to construct $\amb_0$ according to \cref{lem:dd-moment}. At time step $t$ we then will have
        gathered $M + t$ samples, which can be used to construct $\hat{\amb}_t$. We then use the update rule
        \begin{equation} \label{eq:cont-suff-1}
                \amb_{t+1} = \begin{cases}
                        \hat{\amb}_{t+1}, & \text{if } \hat{\amb}_{t+1} \subseteq \amb_t \\
                        \amb_t, & \text{else}.
                \end{cases}
        \end{equation}
        Hence we only need to check $\hat{\amb}_{t+1} \subseteq \amb_t$. Letting $\smomenth_+$ ($\beta_+$), $\smomenth$ ($\beta$) denote the center (radius) of $\hat{\amb}_{t+1}$ and $\amb_t$
        respectively. Then a sufficient condition for $\hat{\amb}_{t+1} \subseteq \amb_t$ is,
        \begin{equation*}
                \nrm{R_w (\smomenth - \smoment) \trans{R_w} }_2 \leq \beta, \quad \forall \smoment \sgeq 0, \text{ s.t., } \nrm{R_w (\smomenth_+ - \smoment) \trans{R_w}}_2 \leq \beta_+.
        \end{equation*}
        This condition is only sufficient since the bounded support would imply $\tr([\smoment]_{:n_w, :n_w}) \leq r^2$, where $[\smoment]_{:n_w, :n_w}$ is a matrix containing the first $n_w$ rows and columns of $\smoment$ (i.e., $\E[w \trans{w}]$). 
        The condition that $\mu$ is a probability measure implies $[\smoment]_{n_w+1, n_w+1} = 1$, where $[\smoment]_{n_w+1, n_w+1}$ denotes the bottom-right element of $\smoment$. Based on numerical experiments however, dropping these 
        constraints does not result in significant conservativeness. 

        Condition \eqref{eq:cont-suff-1} is not trivial to check however. It requires maximizing the convex function $\nrm{R_w (\smomenth - \smoment) \trans{R_w}}_2$ over a convex set, which is NP hard in general.
        We however have some structure we can exploit. Note how
        \begin{equation*}
                \nrm{R_w (\smomenth - \smoment) \trans{R}_w }_2 \leq \beta \qquad \Leftrightarrow \qquad \trans{\begin{bmatrix}
                        \smoment \\ I
                \end{bmatrix}} \begin{bmatrix}
                        -I & \smomenth \\ \smomenth & \beta^2 (\trans{R_w} R_w)^{-2} - \smomenth^2
                \end{bmatrix}\begin{bmatrix}
                        \smoment \\ I
                \end{bmatrix} \sgeq 0.
        \end{equation*}
        This, should hold for all $\smoment \sgeq 0$ such that
        \begin{equation*}
                \trans{\begin{bmatrix}
                        \smoment \\ I
                \end{bmatrix}} \begin{bmatrix}
                        -I & \smomenth_+ \\ \smomenth_+ & \beta_+^2 (\trans{R_w} R_w)^{-2}  - \smomenth_+^2
                \end{bmatrix}\begin{bmatrix}
                        \smoment \\ I
                \end{bmatrix} \sgeq 0.
        \end{equation*}
        A sufficient condition for which is that,
        \begin{equation*}
                \trans{v} \begin{bmatrix}
                        -I & \smomenth \\ \smomenth & \beta^2 (\trans{R_w} R_w)^{-2} - \smomenth^2
                \end{bmatrix} v \geq 0,
        \end{equation*}
        for all $v \in \Re^{2(n_w+1)}$, such that
        \begin{equation*}
                \trans{v} \begin{bmatrix}
                        -I & \smomenth_+ \\ \smomenth_+ & \beta_+^2 (\trans{R_w} R_w)^{-2}  - \smomenth_+^2
                \end{bmatrix} v \geq 0 \text{ and }
                \trans{v} \begin{bmatrix}
                        & I \\ I
                \end{bmatrix} v \geq 0.
        \end{equation*}
        In fact this condition is sufficient and necessary for \eqref{eq:cont-suff-1} (cf. \cite{VanWaarde2020}). 
        Applying the Approximate S-Lemma \cite[Thm.~B.3.1.]{Ben-Tal2009} (which is exact in this case) allows us to formulate this final condition as a LMI. 
\end{appendix}
\end{arxiv}

\end{document}